\newtheorem{theorem}{\textbf{Theorem}}
\newtheorem{lemma}{\textbf{Lemma}}
\newtheorem{remark}{\textbf{Remark}}
\newtheorem{Prob}{\textbf{Problem}}
\newtheorem{property}{Property}
\begin{document}
%\bibliographystyle{IEEEtran}
%\addtolength{\textfloatsep}{-20pt}
%\title{Minimizing bandwidth requirements for Mobile Edge Computing and Caching Systems}
\title{Bandwidth Gain from Mobile Edge Computing and Caching in Wireless Multicast Systems}
\author{Yaping Sun, Zhiyong Chen, Meixia Tao, \emph{IEEE Fellow} and Hui Liu, \emph{IEEE Fellow}\thanks{The paper was presented in part at IEEE Globecom 2018 \cite{gc18}. 

Y. Sun and M. Tao are with the Department of Electronic Engineering, Shanghai Jiao Tong University, Shanghai 200240, China (e-mail: yapingsun@sjtu.edu.cn; mxtao@sjtu.edu.cn).

Z. Chen (\emph{corresponding author}) and H. Liu are with the Cooperative Medianent Innovation Center, Shanghai Jiao Tong University, Shanghai 200240, China, and also with the Shanghai Key Laboratory of Digital Media Processing and Transmissions, Shanghai 200240, China (e-mail: zhiyongchen@sjtu.edu.cn; huiliu@sjtu.edu.cn).}}
%\title{Modeling and Trade-off for Mobile Communication, Computing and Caching Networks}
%\author{Yaping Sun, Zhiyong Chen, Meixia Tao and Hui Liu\\
%Department of Electronic Engineering, Shanghai Jiao Tong University,  Shanghai,  P. R. China\\
%Email: \{yapingsun, zhiyongchen, mxtao, huiliu\}@sjtu.edu.cn}
%\THANKS{tHIS WORK IS SUPPORTED BY THE nATIONAL nATURAL sCIENCE fOUNDATION OF cHINA UNDER GRANTS 61671291, 61571299 AND 61521062.}}
%This paper was partially supported by the National Natural Science Foundation of China (Grant No. 61671291, 61528103, and 61521062).}
\maketitle
\begin{abstract}
In this paper, we present a novel mobile edge computing (MEC) model where the MEC server has the input and output data of all computation tasks and communicates with multiple caching-and-computing-enabled mobile devices via a shared wireless link. %Each mobile device can pre-store the input or output data of a task and also execute a task locally. 
Each task request can be served from local output caching, local computing with input caching, local computing or MEC computing, each of which incurs a unique bandwidth requirement of the multicast link. Aiming to minimize the transmission bandwidth, we design and optimize the local caching and computing policy at mobile devices subject to latency, caching, energy and multicast transmission constraints. %We aim to investigate the impact of local caching and computing at mobile devices as well as content-centric multicast transmission on the saving of required bandwidth on the wireless link. To this end, we first formulate a joint caching and computing decision optimization problem to minimize the required transmission bandwidth subject to latency, caching and energy constraints at each mobile device in the general case. 
The joint policy optimization problem is shown to be NP-hard. When the output data size is smaller than the input data size, we reformulate the problem as minimization of a monotone submodular function over matroid constraints and obtain the optimal solution via a strongly polynomial  algorithm of Schrijver. On the other hand, when the output data size is larger than the input data size, by leveraging sample approximation and concave convex  procedure together with the alternating direction method of multipliers, we propose a low-complexity high-performance algorithm and prove it converges to a stationary point. Furthermore, we \textcolor{black}{theoretically reveal} how much bandwidth gain can be achieved from computing and caching resources at mobile devices or the multicast transmission for symmetric case. Our results indicate that exploiting the computing and caching resources at mobile devices \textcolor{black}{as well as} multicast transmission can provide significant bandwidth savings.

%\textcolor{red}{Computation task service delivery in a computing-enabled and caching-aided multi-user mobile edge computing (MEC) system is studied in this paper, where a MEC server can deliver the input or output datas of tasks to mobile devices over a wireless multicast channel. The  mobile devices are enabled to store the input or output datas of some tasks, and also compute some tasks locally, reducing the wireless bandwidth consumption. The corresponding framework of this system is established, and under the latency constraint, we jointly optimize the caching and computing policy at mobile devices to minimize the required transmission bandwidth. The joint policy optimization problem is shown to be NP-hard, and based on equivalent transformation and exact penalization of the problem, a stationary point is obtained via concave convex procedure (CCCP).
%Moreover, in a symmetric scenario, gains offered by this approach are derived to analytically understand the influences of caching and computing resources at mobile devices, multicast transmission, the number of mobile devices, as well as the number of tasks on the transmission bandwidth. Our results indicate that exploiting the computing and caching resources at mobile devices can provide significant bandwidth savings.}
\end{abstract}
\section{Introduction}\label{I}
\subsection{\textcolor{black}{Motivation}}
\textcolor{black}{The accelerated acquisition of smart mobile devices brings the proliferation of new kinds of mobile traffic, such as virtual reality (VR) and augmented reality (AR) traffic. According to Cisco's prediction, VR traffic and AR traffic will increase by 11 and 7 times in the next five years, respectively \cite{Cisco}. These modern traffic loads require delivery of huge data and intensive computation at ultra low latency, thereby imposing significant stress on wireless network and incurring severe spectrum scarcity problem. For example, mobile VR delivery requires transmission rate on the order of G bit/s \cite{E}. Therefore, how to tackle the spectrum scarcity problem and save bandwidth becomes one of the most important issues of the network operators.}
%Bandwidth saving is an eternal topic in wireless communications systems, especially in the era of shortage of wireless spectrum resource. 

Recently, \emph{moible edge computing (MEC), caching} and \emph{multicast} have been envisioned as three efficient \textcolor{black}{and promising} approaches to tackle the wireless spectrum crunch problem \cite{3C,multicast}. \textcolor{black}{Specifically, modern data traffic generally incurs heavy computation tasks, e.g., mobile VR/AR delivery and online gaming  \cite{MEC_magazine}. Mobile edge computing offers a promising paradigm to improve user-perceived quality of experience (QoE)  by computing some post-processing low-complexity tasks closer to users, either at the mobile edge server or at the mobile device. In the meantime, modern data traffic exhibits a high degree of asynchronous content reuse \cite{Caire}. Content caching is an effective way to reduce peak time traffic by prefetching contents closer to uses during off peak time, thereby alleviating the backhaul capacity requirement and improving user-perceived QoE in wireless networks. In addition, multicast transmission provides an efficient capacity-offloading approach for common content delivery to multiple subscribers on a same resource block \cite{multicast}.}

%Specifically, modern data traffic exhibits a high degree of asynchronous content reuse \cite{Caire}. Caching popular contents closer to users, e.g., at \textcolor{black}{base stations (BSs)} or even at end users, \textcolor{black}{can reduce traffic load and hence} is gradually recognized as a promising approach to improve bandwidth utilization \cite{caching_magazine}. In addition, modern data traffic also incurs heavy computation tasks \cite{MEC_magazine}. By exploiting the difference of data size before and after calculation, computing the tasks closer to users, e.g., at the BSs or even at end users, can also reduce the required transmission bandwidth \cite{Sunvr}. Furthermore, multicast transmission can deliver popular contents to multiple users on a common bandwidth resource simultaneously, saving the bandwidth over unicast transmission \cite{multicast}. On the other hand, 

\textcolor{black}{Meanwhile, today's smart mobile devices possess a tremendous amount of computing power and storage space. Taking the latest generation of iPhone (iPhone XS) for example, iPhone XS not only has a six-core CPU and a quad-core GPU, but also has the neural engine with 8-core with machine learning core processor. Along with powerful computing capability, iPhone XS also offers 512 GB inbuilt storage, similar to the storage size of a computer in 2010. In light of the above mentioned benefits from mobile edge computing, caching and multicast, we would like to take full advantage of the computing power and caching space available in the mobile devices as well as multicast transmission in a multi-user MEC system to save bandwidth cost.}

%\textcolor{blue}{How to efficiently utilize the computing and caching resources at the mobile devices requires careful design.}
The main challenge of utilizing the mobile device's computing and caching resources is how to design the computing and caching policy for the mobile devices. One particular example is mobile VR delivery \cite{Sunvr}. In the VR framework, the projection component can be computed at the MEC server or at the mobile VR devices. Compared with computing at the MEC server, computing at the mobile VR device can reduce at least half of the traffic load, since the data size of the output, i.e., 3D field of view (FOV), is at least twice larger than that of the input, i.e., 2D FOV. However, computing at the mobile VR device may incur longer latency, since the computing capability of the mobile VR device is generally weaker than that of the MEC server. Thus, \textit{the computing policy}, i.e., the decision of computing the projection at the MEC server or at the mobile VR device, requires careful design. In addition, caching capability of each mobile VR device can be utilized to store some input or output data for future requests, reducing the bandwidth cost. Specifically,  compared with caching the input data of some task,
caching the output data can help reduce both latency and energy consumption, since the VR video request can be served directly from local caching and with no need of computing. However, output caching consumes larger caching resource at the mobile VR device, since the output data size is at least twice larger than the input data size. Thus, \textit{the caching policy}, i.e., the decision of caching the input or output  data at the mobile VR device, requires careful design. %for the projection computing process, a mobile VR device can download the 2D field of view (FOV) of VR video (i.e., input of the projection) from the MEC server, and then computes it into 3D FOV of VR video (i.e., output of the projection) \cite{Sunvr}. Considering that the data size of 3D FOV is at least twice larger than that of 2D FOV, computing at the mobile VR device can reduce at least half of the traffic load on the wireless link incurred by computing at the MEC server \cite{Sunvr}.
Such system model can also be commonly seen in other communication-intensive, computation-intensive and delay-sensitive applications, such as online gaming and AR \cite{E}.

%\textcolor{blue}{Therefore, a joint caching and computing policy for the mobile devices, corresponding to decisions on , determines the required bandwidth and requires careful design.} 
In general, a joint caching and computing policy for the mobile devices is to decide what tasks to cache at each mobile device, whether to cache the input or output data of each task, and whether to compute each task locally or at the MEC server. In this paper, we aim at optimizing the joint caching and computing policy such that the bandwidth cost of the wireless multicast channel is minimized and thereby illustrating the impacts of local caching and computing at mobile devices as well as content-centric multicast transmission on the saving of required bandwidth on the wireless link.
 %investigate the impact of computing and caching resources at the mobile devices in wireless multicast system on the bandwidth requirement in this work.} In this paper, by taking full advantage of such abundance of computing power and caching space in the mobile device, we present a novel mobile edge computing (MEC) architecture over wireless multicast channels to obtain significant gain in bandwidth saving.

\subsection{Contribution}
This paper presents a novel MEC architecture, which consists of a single MEC server and $K$ computing-enabled and caching-aided mobile devices, as shown in Fig.~\ref{model}. \textcolor{black}{The MEC server has the input and output data of all computation tasks and communicates with the mobile devices via a wireless multicast channel. Each mobile device can pre-store the input or output data of a task and also execute a task locally.} %The mobile device can pre-cache some of the popular tasks into the storage during the off-peak time. 
Each requested task can be served via the following four different ways: i) local output caching if the output data has already been cached locally; ii) local computing with local input caching if the input data has already been cached locally and is also chosen to be computed locally; iii) local computing without local caching if the input data is downloaded from the MEC server and then computed locally; iv) MEC computing if the output data is directly downloaded from the MEC server. As mentioned above, different caching and computing policies have different bandwidth requirements on the wireless multicast channel and therefore require careful design. %requested and not pre-cached task can be delivered to the mobile device from the MEC server by the multicast transmission, and then be computed in the mobile device. 
%Consider that each task has different sizes between the input data and the output data, e.g., the input data may be \textcolor{blue}{smaller than the output data}. This makes the location of computing be an important consideration for bandwidth cost, because it determines the transmission data size. 
% %The objective of this paper is to \textbf{optimize the joint caching and computing policy such that the bandwidth cost of the wireless multicast channels is minimized.} 
\textcolor{black}{In this paper, we mainly address the following three important issues. }
\begin{figure}[t]
\begin{center}
 \includegraphics[width=8.5cm]{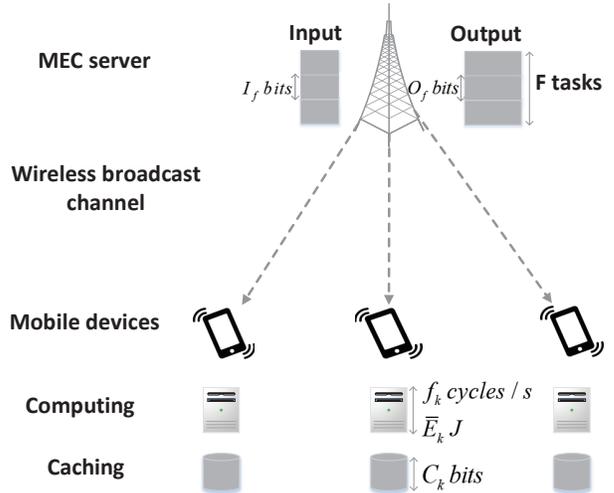}
\end{center}
 \caption{\small{A MEC system consisting of one MEC server and $K$ computing-and-caching-aided mobile devices.}
 }
\label{model}
\end{figure}

1) \textbf{What is the optimal caching and computing policy?}

In order to address this issue, we formulate the joint caching and computing policy optimization problem to minimize the average bandwidth requirement subject to the latency and multicast transmission constraints for each task \textcolor{black}{as well as} the energy and caching size constraints for each mobile device. We show that the optimization problem is a $0$-$1$ integer-programming problem, which is NP-hard in the strong sense. To tackle the problem of intractability of closed-form expression for the average bandwidth requirement since the expectation is taken over the system request space, we approximate the expectation in the objective  via sampling \cite{sample} and \cite{liuyafeng}. 

According to the relationship between the input data size and the output data size, we solve the problem respectively. \emph{For the scenario that the output data size is smaller than the input data size}, we theoretically analyze the optimal structural property and then reformulate the problem as minimization of a monotone submodular function over matroid constraints. This structure allows us to use a strongly polynomial  algorithm of Schrijver to obtain the optimal solution \cite{S}. \emph{For the scenario that the output data size is larger than the input data size}, we analyze the computation complexity and propose a low-complexity high-performance algorithm based on concave convex procedure (CCCP) in conjunction with the alternating direction method of multipliers (ADMM). In particular, firstly, in order to deal with the non-smoothness of the original problem, we reformulate the problem as a difference of convex (DC) problem \cite{exactpenalty} and \cite{infeasible}. Secondly, in order to deal with the non-convexity of the DC problem, we utilize CCCP algorithm to solve a sequence of convex subproblems. Thirdly, to reduce the computation complexity of each subproblem, we reformulate each subproblem as a consensus ADMM form, which enables that each updating step is performed by solving multiple small-size subproblems with closed-form solutions in parallel \cite{erkai,admm,admm2}.

2) \textbf{How much bandwidth reduction can be achieved by enabling caching and computing resources at mobile devices locally compared with the traditional MEC computing?}

In this issue, we try to understand the impacts of the caching and computing resources on the bandwidth gain. Unfortunately, we cannot obtain the  closed-form expression for the minimum bandwidth requirement in the general case. In the symmetric case where all the computation tasks are of the same input and output data size as well as computation load, and user requests are \textcolor{black}{uniformly distributed}, we address this issue theoretically. For example, we reveal that when $f_1 \geq \sqrt{\frac{F\bar{E}}{\mu wC}}$, % when the size of
the ratio of the minimum bandwidth requirement $B^*$ of the proposed system to that of the traditional MEC $B_{MEC}^*$ is
\begin{equation}
\frac{B^*}{B_{MEC}^*} = 1 - \beta_c - \max\left\{\left(1-\frac{1}{\alpha}\right)\beta_e,0\right\},\nonumber
\end{equation}
where $f_1$, $\bar{E}$, $w$ and $C$ are the CPU-cycle frequency (\emph{in cycles/s}), energy (\emph{in J}), computation load (\emph{in cycles/bit}) and caching size (\emph{in bits}) of each mobile device, respectively, $F$ \textcolor{black}{is} the total number of tasks, $\mu$ is a constant related to the hardware architecture, $\beta_{c} \triangleq \frac{C}{FO} \leq 1$ representing the normalized cache size at each device with respect to the total output data size of all the tasks and $\beta_{e} \triangleq \frac{\bar{E}}{\mu Iwf_1^2}\leq 1$ representing the normalized energy at each device with respect to the total average energy \textcolor{black}{consumption} of all the tasks. Here, $I$ and $O$ denote the input data size and output data size of each task, respectively, \textcolor{black}{and then} we define \textcolor{black}{$\alpha=\frac{O}{I}$}. 

Our analysis reveals that when the size of output data is smaller than that of input data, i.e., $\alpha \leq 1$, the bandwidth gain only depends on local caching but not local computing. Otherwise, the gain depends on both local caching and computing. \textcolor{black}{These analytical results offer useful guidelines for designing practical MEC-based multiuser wireless networks.}
%\begin{equation}
%\frac{B_{MEC}^*}{B^*} = \frac{F}{F-\frac{C}{O}} \nonumber,
%\end{equation}
%implying that only caching at mobile devices can bring gain. Otherwise, the gain depends on both local caching and computing, e.g.,
%\begin{equation}
%\frac{B_{MEC}^*}{B^*} = \frac{F}{F-\frac{C}{O}-(\alpha -1) \frac{F\bar{E}}{\mu Owf_1^2}},\nonumber
%\end{equation}
%when .
%Here,  $\alpha \triangleq \frac{O}{I}$  and $\mu$ is a constant related to the hardware architecture.
% i.e., $I_f=I$, $w_f = w$, $O_f=O$, $C_k=C$, $f_k=f_1$ and $\bar{E}_k=\bar{E}$ for all $f\in\mathcal{F}$ and $k\in \mathcal{K}$, we address the following two fundamental questions:

3) \textbf{How much bandwidth reduction can be achieved by exploiting multicast transmission compared with unicast transmission?}

In the symmetric case, we further find that the ratio of the minimum bandwidth requirement $B^*$ of multicast transmission to that of unicast transmission $B_{unicast}^*$ is
\begin{equation}
\frac{B^*}{B_{unicast}^*} = \frac{F(1-(1-\frac{1}{F})^K)}{K},\nonumber
\end{equation}
implying that the gain only depends on the number of mobile devices and that of tasks, \textcolor{black}{and is independent of local caching and computing capiblities}.

%\textit{compared with MEC computing, how much gain on the bandwidth requirement on earth can the caching and computing resources at mobile devices bring? In addition, compared with unicast transmission, how much gain on bandwidth can the multicast transmission bring?}

\subsection{Related Works}
%The MEC network architecture which enables caching and computing capabilities at the edge of wireless networks has been envisioned as an efficient approach to tackle the wireless spectrum crunch problem in \cite{3C,tao, erol,mohammed1}. Specifically, modern data traffic exhibits a high degree of asynchronous content reuse \cite{Caire}. Caching contents closer to users, e.g., at \textcolor{black}{base stations (BSs)} or even at end users, \textcolor{black}{can reduce traffic load, and hence} is gradually recognized as a promising approach to improve bandwidth utilization. In addition, modern data traffic also incurs heavy computation tasks as illustrated in the motivating example. Computing the tasks closer to users, e.g., at the BSs or even at end users, can reduce the required transmission delay as well as traffic, thereby reducing the required bandwidth.

\textit{Communications and Caching Model}. Caching at the MEC networks has been exploited to reduce the required transmission rate \cite{Ali, globecom, xufan,Push,chen,multicast}. \cite{Ali,globecom,xufan} design the optimal caching policy at the end-users. For example, the core idea of \cite{Ali} is how to design the cache placement and coded delivery scheme to achieve global caching gain and minimize the peak transmission rate. \cite{globecom} designs the optimal joint pushing and caching policy to maximize the bandwidth utilization and smooth the traffic load.  \cite{xufan}  studies the fundamental tradeoff between storage and latency in a general wireless interference network with caches equipped at all \textcolor{black}{the} transmitters and receivers. \cite{Push,chen,multicast} design the caching policy at the \textcolor{black}{base stations}.
However, most existing literature on caching does not exploit the computing resources at the MEC network.

\textit{Communications and Computing Model}. %Our considered MEC model differs from the traditional MEC models \cite{mao}.
In the traditional MEC model, each mobile device generates its own computation task, then decides whether to execute the task locally or offload the task to the MEC server via uplink transmission \cite{mao}. In the latter case, the MEC server needs to send the output of the computation task back to the mobile device via downlink transmission. This traditional model mainly focuses on the cost of sending the input data of each computation task in the uplink while ignoring the cost of sending back the output data in the downlink. It is thus not suitable for applications that are bandwidth hungry in downloading the computation output, such as VR video streaming. In addition, \textcolor{black}{most existing literature on computing} does not exploit the caching resources at the MEC networks. %Our considered setting looks very similar to that in \cite{Ali} which studies the fundamental limits of caching, but it is very different. The core idea of \cite{Ali} is how to design the cache placement and coded delivery scheme to achieve global caching gain, which does not exploit the computing resources at mobile devices. As shown in \cite{3C}, computing is one of the three primary resources of mobile systems, and thus this paper is not a simple extension of \cite{Ali}. On the other hand, the proposed system is also different from traditional MEC systems \cite{mao}. In the traditional MEC system, mobile devices offload tasks to the MEC server to reduce latency or local energy consumption, and then the MEC delivers the computation results to mobile devices after computation. However, this approach generally increases the communication resource consumption, and thus it may be not always suitable for high bandwidth consumption application, e.g., VR video streaming illustrated in the motivating example. In addition, taking advantage of computation to reduce the communication load is studied in \cite{song} by using coded caching method \cite{Ali} in a distributed computing system. However, the computation load in \cite{song} is defined as the average number of nodes to compute one function, similar to the caching concept in \cite{Ali}.

%Joint caching and computing at mobile devices for VR delivery has been studied in \cite{Yang} and our previous work \cite{Sunvr}. \cite{Yang} exploits the caching and computing resources at the mobile device to minimize the traffic load over wireless link. \cite{Sunvr} obtains the closed-form expression of the minimum average transmission rate, and analytically illustrates the tradeoff among communication, computing and caching. Note that \cite{Yang} and \cite{Sunvr} only consider a single-user setting and can not be easily extended to the multi-user case.

\textit{Communications, Computing and Caching Model}. Caching and computing resources at the MEC networks have been exploited  \textcolor{black}{collaboratively} in \cite{J,colla,bigdata,game,cui}. Specifically, \cite{colla} proposes a \textcolor{black}{joint} cache allocation and computation offloading policy to maximize the resource utilization in a collaborative MEC network. \cite{bigdata} extends the results in \cite{colla} to a big data MEC network. \cite{game} proposes hybrid control algorithms in smart base stations  along with devised communication, caching, and computing techniques based on game theory to maximize network resource utilization and maximize the users' QoE.  \cite{J} formulates an optimization framework for VR video delivery in a cache-enabled cooperative multi-cell network to maximize the service rewards and explores the fundamental tradeoffs between caching, computing and communication. \cite{cui} designs the optimal caching and computing offloading policy to minimize the average energy \textcolor{black}{consumption}.%\cite{mohammed2} proposes joint  policy based on millimeter wave communication for interactive VR game applications. %reveal the content-centric multicast gain for the bandwidth requirement.

It is worthy to note that  \cite{colla,bigdata,game,J,cui}  mainly try to utilize the caching and computing resources at the MEC servers to alleviate the computation burdens at the mobile devices. However, as mentioned above, taking the mobile VR delivery as an example, computing at the MEC server may incur more transmission data since the computation results are generally larger than the inputs. Joint caching and computing at mobile devices for VR delivery has been studied in \cite{Yang} and our previous work \cite{Sunvr}. \cite{Yang} exploits the caching and computing resources at the mobile device to minimize the traffic load over wireless link. \cite{Sunvr} obtains the closed-form expression of the minimum average transmission rate, and analytically illustrates the tradeoff among communication, computing and caching. Note that \cite{Yang} and \cite{Sunvr} only consider a single-user setting and can not be easily extended to the multi-user case which considers multicast.
%In this paper, we focus on utilizing the caching and computing capabilities at the mobile devices as well as content-centric multicast to alleviate the communication burden on the wireless link and minimize the bandwidth requirement.%, which is of fundamental importance to mobile carriers. %\cite{xiaoyang} exploits the caching and computing capabilities at the mobile VR device to minimize the traffic load over wireless link.

%\textbf{\textit{Paper Organizations.}}
The remainder of this paper is organized as follows. Section II introduces the considered MEC system model. In Section III, we formulate the joint caching and computing policy optimization problem, and show its NP-hardness. Section IV provides the optimal joint policy in scenarios of $\alpha \leq 1$ and $\alpha > 1$. In Section V, we analytically quantify the bandwidth gain from caching, computing and multicast. Comprehensive numerical results are provided in Section VI. Finally, conclusions are drawn in Section VII.
\section{System Model}
As illustrated in Fig.~\ref{model}, we consider a multi-user MEC system consisting of one single-antenna MEC server and $K$ single-antenna mobile devices. \textcolor{black}{It is assumed that the MEC server has access to the input and output datas of all the tasks.\footnote{It is assumed that the main required input data of each computation task is not generated by mobile devices but is available in advance at the MEC server.}  Each mobile device is endowed with a local cache with finite storage size and a local computing server with finite average energy and limited computation frequency. Each mobile device thereby can pre-store the input or output data of a task and also execute a task locally. The system operates in a time-slotted manner with each slot long enough to complete all the computation tasks. At the beginning of every time slot, each mobile device uploads a negligible amount of information to the MEC server via uplink to trigger a computation task according to certain demand probabilities and then downloads the desired data (either the input or output data) via downlink. We assume that each mobile device requests a single task at a time and each request must be served within $\tau$ seconds. Users requesting the same data (either input or output data of the same task) are grouped together and served using multicast transmission \cite{multicast}.}

\subsection{Task and Request Models}
Each task $f\in \mathcal{F}\triangleq \{1,2,\cdots,F\}$ is characterized with  input size $I_f$ (\emph{in bits}), computation load $w_f$ \textcolor{black}{(\emph{in cycles/bit})} and output size $O_f$ (\emph{in bits}) with the ratio $\alpha=\frac{O_f}{I_f}$ for all $f\in \mathcal{F}$. The task request stream at each mobile device is assumed to conform to independent reference model (IRM) based on the following assumptions: \emph{i}) the required tasks are fixed to the set $\mathcal{F}$; \emph{ii}) the probability of the request for task $f$ at mobile device $k$, denoted as $P_{k,f}$, is constant and independent of all the past requests. We have $\sum_{f\in \mathcal{F}} P_{k,f} = 1$, for all $k\in \mathcal{K}\triangleq \{1,2,\cdots,K\}$. Denote with $A_k \in \mathcal{F}$ the task requested by mobile device $k$, and $\mathbf{A} \triangleq (A_k)_{k\in \mathcal{K}} \in \mathcal{F}^K$ the system task request state, where $\mathcal{F}^K$ represents the system task request space. We assume that the $K$ task request processes are independent of each other, and thus we have $P(\textbf{A}) = \prod_{k\in \mathcal{K}} P_{k,A_k}$.

In addition, we assume that each task request must be satisfied within a given time deadline of $\tau$ seconds for quality of experience. For example, in VR video streaming, $\tau\! \approx\! 20$\!~ms to avoid dizziness and nausea \cite{E}.
%Furthermore, we assume that the mobile user requests each viewpoint with equal stationary probability \cite{zhi,J}, i.e., $P_i = 1/N$ for each $i \in \mathcal{I}$. \footnote{We take into account the diversity among the viewpoints in Section~\ref{heterogeneous}.}%. Specifically, each $i\in \mathcal{I}$ is characterized by $(I_i,O_i,w_i,\tau_i)$, respectively.

\subsection{Caching and Computing \textcolor{black}{Models}}
Each mobile device $k$ %is endowed with limited computing \textcolor{blue}{(frequency $f_k$ (\emph{in cycles/s}), average energy $\bar{E}_k$ (\emph{in J}))} and caching $C_k$ \textcolor{blue}{(\emph{in bits})} resources, and 
can trigger a computation task from $\mathcal{F}$ randomly at each time. First, consider the cache placement at mobile device $k$, for all $k\in \mathcal{K}$. We consider that each mobile device $k$ is equipped with a cache size $C_k$ \textcolor{black}{(\emph{in bits})}, and is able to store the input or output data of some tasks.
%the mobile VR is equipped with a cache size $C$ (\emph{in bits}) and able to store the 2D FOV for each viewpoint $i \in \mathcal{I}$.
Denote with $c_{k,f}^I \in \{0,1\}$ the caching decision for input data of task $f$, where $c_{k,f}^I = 1$ means that the input data of task $f$ is cached in the mobile device $k$, and $c_{k,f}^I = 0$ otherwise. Denote with $c_{k,f}^O \in \{0,1\}$ the caching decision for output data of task $f$, where $c_{k,f}^O = 1$ means that the output data of task $f$ is cached in the mobile device $k$, and $c_{k,f}^O = 0$ otherwise. %$0<c_i<1$ means that a portion of the 2D FOV for task $f$ is cached and $c_i = 0$ means that the 2D FOV for task $f$ is not cached in the mobile VR.
Under the cache size constraint, we have
\begin{equation}\label{CacheSize}
\sum_{f=1}^F I_fc_{k,f}^I + O_f c_{k,f}^O \leq C_k,\ k\in \mathcal{K}.
\end{equation}
Denote with $(\textbf{c}^I,\textbf{c}^O)$ the system caching decision, where $\textbf{c}^I \triangleq (c_{k,f}^I)_{k\in \mathcal{K}, f\in \mathcal{F}}$ and $\textbf{c}^O \triangleq (c_{k,f}^O)_{k\in \mathcal{K}, f\in \mathcal{F}}$ satisfy the cache size constraint in (\ref{CacheSize}).
%Here, we assume that $C< N$ for the interest of caching design. In addition, we assume that the 2D FOVs of all the viewpoints are extracted from the 2D plane video at the cloud server offline and then are already cached at the MEC server,  since the storage size at the MEC server is much larger than that of mobile device.
%where $C \triangleq \frac{C}{I}$ represents the maximum number of 2D FOVs that can be cached at the mobile VR, referred to as caching capacity of the mobile VR.\footnote{For convenience, we assume that $C\hspace{-0.0002mm}\mod\! I =\! 0$ in this paper.}%Without loss of generality, we assume $C\!\!\!\mod \!I =0$ in this paper.

Next, consider the computing decision at mobile device $k$, for all $k\in \mathcal{K}$. Each mobile device $k$ is equipped with a computing server, which can run at a constant CPU-cycle frequency $f_k$ \textcolor{black}{(\emph{in cycles/s})} and with a fixed average energy $\bar{E}_k$ \textcolor{black}{(\emph{in J})}. The power consumed at the mobile device for computation per cycle  with frequency $f_k$ is $\mu f_k^3$. Denote with $d_{k,f} \in \{0,1\}$ the computation decision for task $f$, where $d_{k,f}= 1$ means that task $f$ is computed at the mobile device $k$, and $d_{k,f} = 0$ otherwise. Under the average energy consumption constraint, we have
\begin{equation}\label{energy}
\sum_{f=1}^F P_{k,f}\mu f_k^2 I_fw_f d_{k,f} \leq \bar{E}_k,\ k\in \mathcal{K}.
\end{equation}
%where $\bar{E}_k \triangleq W_k\sum_{f=1}^F P_{k,f}\frac{I_fw_f}{f_k}$ (\emph{in J}) represents the average energy available at mobile device $k$ during local computing.
Denote with $\textbf{d} \triangleq (d_{k,f})_{k\in \mathcal{K},f\in \mathcal{F}}$ the system computing decision, which satisfies the average energy consumption constraint in (\ref{energy}).
% the projection component is computed at the MEC server and $d_i = 1$ means that the projection component is computed at the mobile device. %Denote with $R_i$ (in bit/s) the transmission rate for satisfying the request for the 3D FOV of task $f$.

\subsection{Service Mechanism}

Based on the joint caching and computing decision, i.e., $(\textbf{c}^I,\textbf{c}^O,\textbf{d})$, we can see that request for task $f \in \mathcal{F}$ at mobile device $k \in \mathcal{K}$ can be served via the following four routes, each of which yields a unique transmission rate requirement. Denote with $R_{f,j}^k$ (\emph{in bits/s}) the minimum transmission rate required for satisfying task $f$ at mobile device $k$ via Route~$j \in \{1,2,3,4\}$ within the deadline $\tau$ seconds.

\begin{itemize}
\item \textit{\textbf{Route~1: Local output caching}}. If $c_{k,f}^O=1$, i.e., the output data of task $f$ has been cached at the mobile device $k$, request for task $f$ can be satisfied directly from the cache of mobile device $k$, thereby without any need of computing or transmission. Thus, the required latency is negligible and $R_{f,1}^k = 0$.
\item \textit{\textbf{Route~2: Local computing with local input caching}}. If $c_{k,f}^O=0$, but $c_{k,f}^I=1$ and $d_{k,f} = 1$, i.e., the input data of task $f$ has been cached and computed at the mobile device $k$, request for task $f$ can be satisfied via local computing based on the cached input data, thereby without any need of transmission. Thus, the required latency is $\frac{I_fw_f}{f_k}$ and $R_{f,2}^k = 0$. For feasibility, we assume that $\frac{I_fw_f}{f_k} \leq \tau, f\in \mathcal{F}, k\in \mathcal{K}$.
\item \textit{\textbf{Route~3: Local computing without local caching}}. If $c_{k,f}^O=0$, $c_{k,f}^I=0$ and $d_{k,f} = 1$, i.e., the output or input data of task $f$ has not been cached and task $f$ is chosen to be computed at the mobile device $k$, the execution for satisfying task $f$ consists of the following two stages: \emph{i}) the input data of task $f$ is transmitted from the MEC server; \emph{ii}) the input data is computed at the mobile device $k$. Thus, the required latency is $\frac{I_f}{R_{f,3}^k} + \frac{I_fw_f}{f_k}$. Under the latency constraint, we have $\frac{I_f}{R_{f,3}^k} + \frac{I_fw_f}{f_k} = \tau$, i.e., $R_{f,3}^k = \frac{I_f}{\tau-\frac{I_fw_f}{f_k}}$.
\item \textit{\textbf{Route~4: MEC computing}}. If $c_{k,f}^O = 0$, $c_{k,f}^I = 0$ and $d_{k,f} = 0$, i.e., output or input data of task $f$ has not been cached and task $f$ is not chosen to be computed locally, task $f$ is satisfied via downloading the output data from the MEC server. Thus, the required latency is $\frac{O_f}{R_{f,4}^k}$. Under latency constraint, we have $\frac{O_f}{R_{f,4}^k} = \tau$, i.e., $R_{f,4}^k = \frac{O_f}{\tau}$.
\end{itemize}

In summary, denote with $x_{f,j}^k\in \{0,1\}$ the service decision for task $f$ at mobile device $k$, where $x_{f,j}^k = 1$ means that task $f$ at mobile device $k$ is served via \textcolor{black}{above-mentioned} Route~$j\in \{1,2,3,4\}$, and $x_{f,j}^k =0$ otherwise. To guarantee that task $f$ at mobile device $k$ gets served, we have
\begin{equation}\label{schedule}
\sum_{j=1}^4 x_{f,j}^k = 1,\ f\in \mathcal{F},\ k\in \mathcal{K}.
\end{equation}

In addition, the cache size and average energy consumption constraints in (\ref{CacheSize}) and (\ref{energy}) can be rewritten as
\begin{equation}\label{cachesize2}
\sum_{f=1}^F I_f x_{f,2}^k + O_f x_{f,1}^k \leq C_k,\ k\in \mathcal{K},
\end{equation}
\begin{equation}\label{energy2}
\sum_{f=1}^F P_{k,f}\mu f_k^2I_fw_f(x_{f,2}^k+x_{f,3}^k) \leq \bar{E}_k,\ k\in \mathcal{K}.
\end{equation}
\begin{table}[t]
\caption{Transmission Rates vs. Local Caching and Computing Costs}\label{tradeoff}
\newcommand{\tabincell}[2]{\begin{tabular}{@{}#1@{}}#2\end{tabular}}
\begin{center}
\begin{tabular}{lccc}
\hline
\ \ \ \ \ \ \ \ \ \ Service\! Route  & Rate  & Caching\! Cost & Computing\! Cost \\
\hline
\tabincell{c}{$x_{f,1}^k\!=\!1$\\($c_{k,f}^O\!=\!1,\!c_{k,f}^I\!=0,\! d_{k,f}\!=\!0$)}  & $0$ & $O_f$ & $0$\\
\hline
\tabincell{c}{$x_{f,2}^k\!=\!1$\\($c_{k,f}^O\!=\!0,\!c_{k,f}^I\!=1, \!d_{k,f}\!=\!1$)} & $0$ & $I_f$  & $P_{k,f}\mu I_fw_ff_k^2$\\
\hline
\tabincell{c}{$x_{f,3}^k\!=\!1$\\($c_{k,f}^O\!=\!0,\!c_{k,f}^I\!=0, \!d_{k,f}\!=\!1$)}   & $R_{f,3}^k$ & $0$ & $P_{k,f}\mu I_fw_ff_k^2$ \\
\hline
\tabincell{c}{$x_{f,4}^k\!=\!1$\\($c_{k,f}^O\!=\!0,\!c_{k,f}^I\!=0, \!d_{k,f}\!=\!0$)}  & $R_{f,4}^k$ & $0$ & $0$\\
\hline
\end{tabular}
\end{center}
\label{a}
\end{table}%

For clarity, we illustrate the relationship between the service policy $\textbf{x} \triangleq (x_{f,j}^k)_{f\in \mathcal{F}, j\in \{1,2,3,4\},k\in \mathcal{K}}$ and joint caching and computing policy, i.e., $(\textbf{c}^I,\textbf{c}^O,\textbf{d})$, as well as the associated transmission rates and local caching and computing costs in Table~I.

\subsection{Multicast Transmission Model}
At each time slot, given system task request state $\textbf{A}$ and service decision $\textbf{x}$, \textcolor{black}{users requesting the same data (either input or output data of the same task) are grouped together and served using multicast transmission.} %\cite{multicast}. %the MEC server employs multicasting to serve multiple mobile devices concurrently if they request the same task. %simultaneously serve many different requests for either input or output data of the same task.
Specifically, denote with $B_f^I(\textbf{x}, \textbf{A})$ and $B_f^O(\textbf{x},\textbf{A})$ (in \emph{Hz}) the bandwidth allocated by the MEC server for transmitting the input and output data of task $f\in \mathcal{F}$, respectively. To guarantee each user's QoE and  considering that the multicast rate is limited by the user with the worst channel condition, we have
\begin{align}\label{input}
&B_f^I(\textbf{x},\textbf{A})\min_{k\in \mathcal{K}} \log\left(1+\frac{Ph_k^2}{\sigma^2}\right)\textbf{1}(A_k=f)x_{f,3}^k\nonumber\\
&\ \ \ \ \ \ \ \ \ \ \ \ \  \ \ \ \ \ \ \ \ \geq \max_{k\in \mathcal{K}} R_{f,3}^k\textbf{1}(A_k=f)x_{f,3}^k,\ f\in \mathcal{F},
\end{align}
\begin{align}\label{output}
&B_f^O(\textbf{x},\textbf{A})\min_{k\in \mathcal{K}} \log\left(1+\frac{Ph_k^2}{\sigma^2}\right)\textbf{1}(A_k=f)x_{f,4}^k\nonumber\\
&\ \ \ \ \ \ \ \ \ \ \ \ \  \ \ \ \ \ \ \ \ \geq \max_{k\in \mathcal{K}} R_{f,4}^k\textbf{1}(A_k=f)x_{f,4}^k,\ f\in \mathcal{F},
\end{align}
where $P$ denotes the transmission power of the MEC server, $\sigma^2$ denotes the variance of complex white Gaussian channel noise, and $h_k$ denotes the channel gain for mobile device $k$, which is assumed to be constant within the deadline $\tau$ seconds, respectively. $\textbf{1}(\cdot)$ denotes the indicator function throughout the paper.

%Given $\textbf{A}$, denote with $R_f(\textbf{x},\textbf{A})$ (\emph{in bits/s}) the transmission rate for satisfying task $f$ under $\textbf{x}$, and we have
%\begin{align}
%&R_f(\textbf{x}, \textbf{A}) = \max_{k\in \mathcal{K}} R_{f,3}^k\textbf{1}(A_k=f)x_{f,3}^k \nonumber\\
%&\hspace{20mm}+ \max_{k\in \mathcal{K}} R_{f,4}^k \textbf{1}(A_k=f)x_{f,4}^k,\ f\in \mathcal{F}.
%\end{align}
Under $\textbf{x}$, denote with $B(\textbf{x})$ the average bandwidth requirement, and we have
\begin{equation}\label{averagerate}
B(\textbf{x}) = \mathbb{E}\left[ \left(\sum_{f=1}^F B^I_f(\textbf{x},\textbf{A}) + B^O_f(\textbf{x},\textbf{A})\right)\right],
\end{equation}
where the expectation is taken over the system request state $\textbf{A}\! \in\! \mathcal{F}^K$.
%\subsection{Service Model and Transmission Rate Requirement}

%To satisfy the request for the 3D FOV of task $f$, the mobile device first checks its own cache placement and then decides whether to execute the projection component at the MEC server or the mobile device. %for task $f$.

%The MEC server is usually powered by an electricity grid and thus we do not take into account the energy consumption at the MEC server. On the other hand, the mobile device is powered by its own battery.
%where $ NE \triangleq \frac{E}{k f_1^2Iw} $ represents the maximum number of projections that can be computed at the mobile VR, referred to as computation capacity of the mobile VR.\footnote{In this paper, we assume that $E\hspace{-0.2mm} \mod\! {k f_1^2Iw} = 0$.}
%Thus, the number of VR tasks that can be offloaded to the mobile VR, denoted as $R \triangleq \sum_{i=1}^N d_i$, is no larger than $$
\section{Problem Formulation and Analysis}
In this paper, our objective is to minimize the average bandwidth requirement subject to the latency,  \textcolor{black}{multicast transmission}, cache size and average energy consumption constraints. The optimization problem can be formulated as the following $0$-$1$ integer-programming problem.
\begin{Prob}[Average Bandwidth Minimization]\label{Prob1}
\begin{align}
& \min_{\textbf{x}} \ \ \ \ \ \ \ \ \ \ \ \ \ \ \ \ \ \ \ \ \ \ \ \ \ \ B(\textbf{x}) \nonumber\\
& \ s.t. \ \ \ \ \ \ \ \ \ \ \ \ \ \ \ \ \ \ \ \ (\ref{schedule}),(\ref{cachesize2}),(\ref{energy2}),(\ref{input}),(\ref{output}), \nonumber\\
&\ \ \ \ \ \ x_{f,j}^k \in \{0,1\},\ f\in \mathcal{F},\ k\in \mathcal{K},\ j\in \{1,2,3,4\}.\label{binary1}
\end{align}
\end{Prob}
Denote with $B^*$ the minimum average bandwidth, and $\textbf{x}^*$ the optimal service decision. Thus, we have $B^* = B(\textbf{x}^*)$ and then obtain the optimal joint caching and computing policy $(\textbf{c}^{I*},\textbf{c}^{O*}, \textbf{d}^*)$ from Table~I with $\textbf{x}^*$.

%In order to minimize the average bandwidth requirement,
It is direct to observe that (\ref{input}) and (\ref{output}) are reduced to equality for optimality, and accordingly we have
\begin{align}\label{input1}
&B_f^I(\textbf{x},\textbf{A}) = \max_{k\in \mathcal{K}} \frac{1}{\log\left(1+\frac{Ph_k^2}{\sigma^2}\right)}\textbf{1}(A_k=f)x_{f,3}^k\nonumber\\
&\ \ \ \ \ \ \ \ \ \ \ \ \  \ \ \ \ \ \  \times\max_{k\in \mathcal{K}} R_{f,3}^k\textbf{1}(A_k=f)x_{f,3}^k,\ f\in \mathcal{F},
\end{align}
\begin{align}\label{output1}
&B_f^O(\textbf{x},\textbf{A}) = \frac{O_f}{\tau}\max_{k\in \mathcal{K}} \frac{1}{\log\left(1+\frac{Ph_k^2}{\sigma^2}\right)}\textbf{1}(A_k=f)x_{f,4}^k,\nonumber\\
&\ \ \ \ \ \ \ \ \ \ \ \ \  \ \ \ \ \ \ \ \ \ \ \ \ \ \ \ \ \ \ \ \ \ \ \ \ \ \ \ \ \ \ \ \ \ \ \ \ \ \ \ \ f\in \mathcal{F}.
\end{align}
\subsection{Computation Intractability}
In the following, we show that Problem~\ref{Prob1} is NP-hard in the strong sense. Consider a single user scenario for the problem, i.e., $K=1$. %At the single mobile device,
For all task $f\in \mathcal{F}$, denote with $x_{f,j}$ the service decision for Route $j \in \{1,2,3,4\}$, $P_f$ the request probability, $C$ the cache size, $\bar{E}$ the average energy, $f_1$ the computation frequency (\emph{in cycles/s}) and $h$ the channel gain of the single mobile device.  In this case, Problem~\ref{Prob1} can be formulated as the following maximization problem:
\begin{Prob}[Optimization for Single User Scenario]\label{mmkp}
\begin{align}
& \max_{(x_{f,j})_{f\in \mathcal{F}, j\in \{1,2,3,4\}}} \ \ \ \ \ \ \  \ \ \ \ \sum_{f = 1}^F \sum_{j=1}^4 v_{f,j}x_{f,j}\nonumber \\
&\ \ \ \ \ \ \ \ s.t. \ \ \ \ \ \ \  \ \ \ \ \ \ \ \ \ \sum_{f=1}^F\sum_{j=1}^4 w_{f,j}^1x_{f,j} \leq C, \label{cache1}\\
&\ \ \ \ \ \ \  \ \ \ \ \ \ \ \ \ \ \ \ \ \ \ \ \ \ \ \ \ \sum_{f=1}^F\sum_{j=1}^4 w_{f,j}^2x_{f,j} \leq \bar{E},\label{energy22}\\
&\ \ \ \ \ \ \  \ \ \ \ \ \ \ \ \ \ \ \ \ \ \ \ \ \ \ \ \ \sum_{j=1}^4 x_{f,j} = 1, \ f \in \mathcal{F},\label{sum}\\
&\ \ \ \ \ \ \  \ \ \ \ \ \ \ \ \  x_{f,j} \in \{0,1\}, \ f \in \mathcal{F},\ j \in \{1,2,3,4\}\label{binary},
\end{align}
\end{Prob}
where
\begin{equation}\label{value}
v_{f,j} \triangleq
\begin{cases}
P_f\frac{O_f}{\tau}\frac{1}{\log(1+\frac{Ph^2}{\sigma^2})},& \ \ \ \ \ \text{$j=1,2$,}\\
P_f\left(\frac{O_f}{\tau}-\frac{I_f}{\tau-\frac{I_fw_f}{f_1}}\right)\frac{1}{\log(1+\frac{Ph^2}{\sigma^2})},& \ \ \ \ \ \text{$j=3$,}\\
0,& \ \ \ \ \ \text{$j=4$,}
\end{cases}
\end{equation}
denotes the profit value for the choice of Route $j$ for task $f$,
\begin{equation}\label{value}
w_{f,j}^1 \triangleq
\begin{cases}
O_f,& \ \ \ \ \ \text{$j=1$,}\\
I_f, & \ \ \ \ \ \text{$j=2$,}\\
0, & \ \ \ \ \ \text{$j=3,4$,}
\end{cases}
\end{equation}
denotes the caching cost for the choice of Route $j$ for task $f$, and
\begin{equation}\label{value}
\ \ \ \ \ \ \ \ w_{f,j}^2 \triangleq
\begin{cases}
P_f \mu I_fw_ff_1^2,& \ \ \ \ \ \text{$j=2,3$,}\\
0, & \ \ \ \ \ \text{$j=1,4$,}\\
%0 & \ \ \ \ \ \text{$k=3,4$.}
\end{cases}
\end{equation}
denotes the energy cost for the choice of Route $j$ for task $f$.

We can see that Problem~\ref{mmkp} is a $4$-choice $2$-dimensional knapsack problem, which is a well-known NP-hard problem in the strong sense \cite{mmkp}. Thus, we conclude that Problem~1 in the multiple-user scenario is also NP-hard in strong sense.

\subsection{Equivalent Problem Reformulation}
Furthermore, it is difficult to derive a closed-form expression for the objective function in (\ref{averagerate}) since the expectation is taken over the systematic request space $\mathcal{F}^K$. We replace the objective function in (\ref{averagerate}) with its sample approximation \cite{sample, liuyafeng} and  reformulate Problem~\ref{Prob1} as:
\begin{Prob}[Equivalent Problem Reformulation]\label{approximate}
\begin{align}
& \min_{\textbf{x}} \ \ \ \ \ \ \ \  \frac{1}{N} \sum_{n=1}^N \sum_{f=1}^F \left(B^I_f(\textbf{x},\textbf{A}_n) + B^O_f(\textbf{x},\textbf{A}_n)\right)\nonumber\\
& \ s.t. \ \ \ \ \ \ \ \ \ \ \ \ (\ref{schedule}),(\ref{cachesize2}),(\ref{energy2}), (\ref{binary1}), \nonumber\\%(\ref{input}),(\ref{output})
& \ \ \ \ \ \ \ B_f^I(\textbf{x},\textbf{A}_n) = \max_{k\in \mathcal{K}} \frac{1}{\log\left(1+\frac{Ph_k^2}{\sigma^2}\right)}\textbf{1}(A_{n,k}=f)x_{f,3}^k\nonumber\\
&\ \ \ \ \ \ \  \ \ \ \ \ \ \ \ \ \ \ \ \ \ \ \ \ \times \max_{k\in \mathcal{K}} R_{f,3}^k\textbf{1}(A_{n,k}=f)x_{f,3}^k,\ \nonumber\\
&\ \ \ \ \ \ \ \ \ \ \ \ \ \ \ \ \ \ \ \ \ \ \ \ \ \ \ \ \ \ \ \ \ \ \ \ \ \ \ \ \ \ f\in \mathcal{F},\ n\in \mathcal{N}, \label{sample1}\\
&\ \ \ \ \ \ \ B_f^O(\textbf{x},\textbf{A}_n) = \frac{O_f}{\tau}\max_{k\in \mathcal{K}} \frac{1}{\log\left(1+\frac{Ph_k^2}{\sigma^2}\right)}\textbf{1}(A_{n,k}=f)x_{f,4}^k\nonumber\\
%&\ \ \ \ \ \ \ \ \ \ \ \ \  \ \ \ \ \ \ \ \ \ \  *\max_{k\in \mathcal{K}} R_{f,4}^k\textbf{1}(A_{n,k}=f)x_{f,4}^k,\ \nonumber\\
&\ \ \ \ \ \ \ \ \ \ \ \ \ \ \ \ \ \ \ \ \ \ \ \ \ \ \ \ \ \ \ \ \ \ \ \ \ \ \ \ \ \ f\in \mathcal{F},\ n\in \mathcal{N}, \label{sample2}
\end{align}
where $N$ is the sample size, $\mathcal{N} \triangleq \{1,\cdots,N\}$ and $\{\textbf{A}_n\}_{n\in \mathcal{N}} \triangleq \{(A_{n,k})_{k\in \mathcal{K}}\}_{n\in \mathcal{N}}$ are the request samples drown according to the distribution of $\textbf{A}$.
\end{Prob}
\section{Optimal Policy Design}
 In this section, \textcolor{black}{when the output data size is smaller than the input data size ($\alpha \leq 1$)}, we reformulate the problem as minimization of a monotone submodular function over matroid constraints via analyzing the optimal structural property. \textcolor{black}{On the other hand, when $\alpha > 1$,} we analyze its computation complexity and propose a low-complexity high-performance algorithm named as CCCP-ADMM to obtain a stationary point.
\subsection{Optimal policy design for $\alpha \leq 1$}
When $\alpha \leq 1$, we obtain the structural properties of the optimal service policy as below.
\begin{property}\label{optimalproperty}
For all $f\in \mathcal{F}$ and $k\in \mathcal{K}$, we have $x_{f,2}^{k,*} =0$ and $x_{f,3}^{k,*}=0$, i.e., $c_{k,f}^{I,*} =0$ and $d_{k,f}^*=0$.
\end{property}

Property~\ref{optimalproperty} can be proofed by contradiction. First, suppose that there exist $k' \in \mathcal{K}$ and $f' \in \mathcal{F}$ such that $x_{f',2}^{k',*} = 1$. However, by setting $x_{f',2}^{k',*} $ from $1$ to $0$ and $x_{f',1}^{k',*} $ from $0$ to $1$, $B(\textbf{x}^*)$ does not change and caching cost is saved. Thus, $x_{f',2}^{k',*} = 1$ is not optimal.  Secondly, suppose that there exist $k' \in \mathcal{K}$ and $f' \in \mathcal{F}$ such that $x_{f',3}^{k',*} =1$. However, by setting $x_{f',3}^{k',*} $ from $1$ to $0$ and $x_{f',4}^{k',*} $ from $0$ to $1$, $B(\textbf{x}^*)$ does not increase since $R_{f',3}^{k'} \geq R_{f',4}^{k'}$ when $\alpha \leq 1$ and computing cost is saved. Thus, $x_{f',3}^{k',*} = 1$ is not optimal.

%we have $O_f \leq I_f$ and $R_{f,3}^k \geq R_{f,4}^k$ for all $f\in \mathcal{F}$ and $k\in \mathcal{K}$. Thus, from Table~I, we see that compared with MEC computing, local computing brings no transmission rate gain (i.e., $R_{f,3}^k \geq R_{f,4}^k$), but incurs additional local energy consumption (i.e., $P_{k,f}\mu I_fw_ff_k^2$); compared with local output caching, local input caching with local computing brings no transmission gain (i.e., $0$), but incurs additional local cache cost (i.e., $I_f-O_f \geq 0$) and energy resource consumption (i.e., $P_{k,f}\mu I_fw_ff_k^2$). Thus,
Property~\ref{optimalproperty} indicates that when $\alpha \leq 1$, \emph{there is no gain from local input caching and local computing, and only exists output caching gain}. Based on Property~\ref{optimalproperty}, Problem~\ref{Prob1} can be reformulated as Problem~\ref{alpha<1}.
\begin{Prob}[\textcolor{black}{Equivalent Optimization} when $\alpha \leq 1$]\label{alpha<1}
\begin{align}
&(x_{f,1}^{k,*})_{f\in \mathcal{F},k\in \mathcal{K}}\nonumber\\
&\triangleq \arg \min_{(x_{f,1}^k)_{f\in \mathcal{F},k\in \mathcal{K}}}   \sum_{n=1}^N\sum_{f=1}^F\! \frac{O_f}{\tau}\!\max_{k\in \mathcal{K}:A_{n,k}=f} \frac{1-x_{f,1}^k}{\log\left(1+\frac{Ph_k^2}{\sigma^2}\right)}\nonumber\\
&\ \ \ \ \ \ \ \ \ \ \ \ \ s.t.\ \ \ \ \ \ \ \ \  \ \sum_{f=1}^F O_f x_{f,1}^k \leq C_k,\ k\in \mathcal{K}, \nonumber \\
&\ \ \ \ \ \ \ \ \ \ \ \ \ \ \ \ \ \ \ \ \ \ \ \ \ \ \ x_{f,1}^k \in \{0,1\},\  f\in \mathcal{F},\ k\in \mathcal{K}. \nonumber
\end{align}
And $x_{f,4}^{k,*} = 1- x_{f,1}^{k,*}$, $\forall\ f\in \mathcal{F}$ and $k \in \mathcal{K}$.
\end{Prob}
\begin{lemma}\label{sub}
Problem~\ref{alpha<1} is a monotonically nonincreasing submodular function minimization problem subject to matroid constraints. 
\end{lemma}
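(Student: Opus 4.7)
The plan is to recast Problem~4 as a set-function optimization over the ground set $V = \mathcal{K}\times\mathcal{F}$ by identifying each feasible caching decision $(x_{f,1}^k)$ with the subset $S = \{(k,f)\in V : x_{f,1}^k = 1\}$. Abbreviating $a_k \triangleq 1/\log(1+Ph_k^2/\sigma^2)$, $K_{n,f}\triangleq\{k\in\mathcal{K} : A_{n,k}=f\}$, and $S^{(f)}\triangleq\{k : (k,f)\in S\}$, the objective becomes
\begin{equation*}
g(S) \;=\; \sum_{n=1}^N \sum_{f=1}^F \frac{O_f}{\tau}\,\max_{k \in K_{n,f}\setminus S^{(f)}} a_k,
\end{equation*}
with the convention $\max_{\emptyset}=0$. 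The three items to establish are monotonicity of $g$, submodularity of $g$, and the matroid nature of the remaining constraints.

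Monotonicity is immediate, since enlarging $S$ shrinks every $K_{n,f}\setminus S^{(f)}$ and hence weakly decreases every inner max; the nonnegative weights $O_f/\tau$ preserve this upon summation. For submodularity, because nonnegative combinations of submodular functions are submodular, it suffices to treat one term $\phi(S) \triangleq \max_{k \in K_{n,f}\setminus S^{(f)}} a_k$ and verify the symmetric inequality $\phi(A)+\phi(B)\geq \phi(A\cup B)+\phi(A\cap B)$. I would use the identities $K_{n,f}\setminus(A\cap B)^{(f)} = (K_{n,f}\setminus A^{(f)})\cup(K_{n,f}\setminus B^{(f)})$, which gives $\phi(A\cap B) = \max\{\phi(A),\phi(B)\}$, and $K_{n,f}\setminus(A\cup B)^{(f)} = (K_{n,f}\setminus A^{(f)})\cap(K_{n,f}\setminus B^{(f)})$, which gives $\phi(A\cup B)\leq \min\{\phi(A),\phi(B)\}$. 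The required inequality then collapses to the identity $\phi(A)+\phi(B) = \min\{\phi(A),\phi(B)\}+\max\{\phi(A),\phi(B)\}$.

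For the matroid claim I would note that $\sum_{f} O_f x_{f,1}^k \leq C_k$ decouples across users $k$, so the overall feasibility on $\mathcal{K}\times\mathcal{F}$ is a direct sum of per-user feasibility families on the disjoint parts $\{k\}\times\mathcal{F}$; since the direct sum of matroids is a matroid, the task reduces to identifying each per-user family as a matroid. I expect this last step to be the main obstacle, because a knapsack with heterogeneous weights $O_f$ is not in general a matroid; I would align with whatever abstraction the paper invokes when subsequently applying Schrijver's strongly polynomial algorithm (for instance, treating each $\{k\}\times\mathcal{F}$ as a weighted partition class or specializing to a uniform-cost regime so that each cache constraint reduces to a cardinality bound) and verify that the resulting structure is compatible with the standard submodular-minimization-over-matroid template.
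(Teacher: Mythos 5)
Your treatment of monotonicity and submodularity is correct and follows essentially the same route as the paper's Appendix~A: decompose the objective as a nonnegative sum over $(n,f)$ of per-term maxima and check each term separately. Your set-function reformulation $g(S)=\sum_{n,f}\frac{O_f}{\tau}\max_{k\in K_{n,f}\setminus S^{(f)}}a_k$ matches the paper's objective (since $(1-x_{f,1}^k)a_k$ vanishes exactly on cached pairs), and your verification of per-term submodularity via $K_{n,f}\setminus(A\cap B)^{(f)}=(K_{n,f}\setminus A^{(f)})\cup(K_{n,f}\setminus B^{(f)})$ and $K_{n,f}\setminus(A\cup B)^{(f)}=(K_{n,f}\setminus A^{(f)})\cap(K_{n,f}\setminus B^{(f)})$, yielding $\phi(A\cap B)=\max\{\phi(A),\phi(B)\}$ and $\phi(A\cup B)\le\min\{\phi(A),\phi(B)\}$, is actually more rigorous than the paper's one-line assertion that the marginal value of adding a file is decreasing.

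The obstacle you flag in the last step is genuine, and the paper does not resolve it either. The per-user constraint $\sum_f O_f x_{f,1}^k\le C_k$ with heterogeneous $O_f$ defines a knapsack independence system, which is downward closed but fails the matroid exchange axiom (weights $3,2,2$ with capacity $4$: the independent set $\{3\}$ cannot be augmented from $\{2,2\}$), so the direct sum over the parts $\{k\}\times\mathcal{F}$ is a matroid only if each part is. The paper's Appendix~A merely asserts that the constraints "can be rewritten as multiple matroid constraints according to \cite{femto} directly," but the argument in \cite{femto} relies on equal-size files, under which each cache constraint becomes a cardinality bound, each part a uniform matroid, and the whole a partition matroid --- exactly the specialization you tentatively propose. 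So the lemma as literally stated is only justified when the $O_f$ are equal (e.g., the symmetric scenario of Section~V); you should state that restriction explicitly rather than leaving the matroid identification as a deferral, since for heterogeneous $O_f$ the claim is false as written.
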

\begin{proof}
Please see Appendix~A. 
\end{proof}
%We can see that Problem~\ref{alpha<1} is a monotonically nonincreasing submodular function minimization problem subject to matroid constraints \cite{S} and \cite{femto}. 
This structure allows us to use a strongly polynomial algorithm of Schrijver to obtain the optimal solution \cite{S}.

\subsection{Optimal policy design for $\alpha > 1$}

When $\alpha > 1$, Problem~\ref{approximate} is not easy to solve mainly due to the following three reasons. First, the objective function is nonsmooth and nonconvex. Secondly,   (\ref{binary1}) are binary constraints, albeit (\ref{schedule}), (\ref{cachesize2}) and (\ref{energy2}) are convex. Thirdly, the sample size $N$ generally needs to be large enough such that the sample average is a good approximation to the original expectation \cite{liuyafeng}.  Thus,  solving Problem~\ref{approximate} directly is of high computation complexity. In the following, we first reformulate Problem~\ref{approximate} as a  continuous smooth DC problem, and then leverage CCCP to approximate the nonconvex problem as a sequence of convex subproblems. Each convex subproblem is then reformulated as a consensus ADMM form. The ADMM reformulation enables that each updating step is performed by solving multiple small-size subproblems with closed-form solutions in parallel. Finally, we obtain a stationary point of the original problem.

%In particular,

\subsubsection{DC problem formulation} Firstly, for all $f\in \mathcal{F}$ and $n\in \mathcal{N}$, introduce auxiliary variables $a^I_{f,n}$, $b^I_{f,n}$ and $a^O_{f,n}$ satisfying
\begin{align}\label{ai}
&a^I_{f,n} = \max_{k\in \mathcal{K}} \frac{1}{\log\left(1+\frac{Ph_k^2}{\sigma^2}\right)}\textbf{1}(A_{n,k}=f)x_{f,3}^k,\nonumber\\
&\hspace{53mm} f\in \mathcal{F}, \ n \in \mathcal{N},
\end{align}
\begin{align}\label{bi}
b^I_{f,n} = \max_{k\in \mathcal{K}} R_{f,3}^k \textbf{1}(A_{n,k}=f) x_{f,3}^k,\  f \in \mathcal{F}, \ n \in \mathcal{N},
\end{align}
\begin{align}\label{ao}
&a^O_{f,n} = \max_{k\in \mathcal{K}} \frac{1}{\log\left(1+\frac{Ph_k^2}{\sigma^2}\right)}\textbf{1}(A_{n,k}=f)x_{f,4}^k,\nonumber\\
&\hspace{53mm}f\in \mathcal{F},\ n \in \mathcal{N},
\end{align}
%\begin{align}\label{bo}
%b^O_{f,n} = \max_{k\in \mathcal{K}} R_{f,4}^k \textbf{1}(A_{n,k}=f)x_{f,4}^k, \ f \in \mathcal{F}, \ n \in \mathcal{N},
%\end{align}
respectively.
 %$a^O(f,\textbf{A}) \triangleq \max_{k\in \mathcal{K}} \frac{1}{\log\left(1+\frac{Ph_k^2}{\sigma^2}\right)}\textbf{1}(A_k=f)x_{f,4}^k$, and $b^O(f,\textbf{A}) \triangleq \max_{k\in \mathcal{K}} R_{f,4}^k\textbf{1}(A_k=f)x_{f,4}^k$.
% satisfying  $a^I(f,\textbf{A}) = \max_{k\in \mathcal{K}} \frac{1}{\log\left(1+\frac{Ph_k^2}{\sigma^2}\right)}\textbf{1}(A_k=f)x_{f,3}^k$, satisfying  $b^I(f,\textbf{A}) = \max_{k\in \mathcal{K}} R_{f,3}^k\textbf{1}(A_k=f)x_{f,3}^k$, $a^O(f,\textbf{A})$ satisfying  $a^O(f,\textbf{A}) = \max_{k\in \mathcal{K}} \frac{1}{\log\left(1+\frac{Ph_k^2}{\sigma^2}\right)}\textbf{1}(A_k=f)x_{f,4}^k$,  satisfying  $b^O(f,\textbf{A}) = \max_{k\in \mathcal{K}} R_{f,4}^k\textbf{1}(A_k=f)x_{f,4}^k$.
 Accordingly, $B^I_f(\textbf{x},\textbf{A}_n)$ in (\ref{sample1}) and $B^O_f(\textbf{x},\textbf{A}_n)$ in (\ref{sample2}) can be rewritten as
\begin{align}
&B_f^I(\textbf{x},\textbf{A}_n) = \frac{\left(a^I_{f,n} + b^I_{f,n}\right)^2}{4}- \frac{\left(a^I_{f,n} \!- b^I_{f,n}\right)^2}{4},\nonumber\\
&\hspace{53mm} f\in \mathcal{F},\ n \in \mathcal{N},\label{input2}\\
&B_f^O(\textbf{x},\!\textbf{A}_n) = \frac{O_f}{\tau}a^O_{f,n},\ f\in \mathcal{F},\ n\in \mathcal{N}, \label{output2} %\frac{\left(a^O_{f,n} + b^O_{f,n}\right)^2}{4}- \frac{\left(a^O_{f,n} - b^O_{f,n}\right)^2}{4},
\end{align}
respectively, each of which is a DC function.

Secondly, (\ref{binary1}) can be rewritten as
\begin{equation}\label{aa}
x_{f,j}^k \in [0,1], \ f \in \mathcal{F},\ j \in \{1,2,3,4\},\ k\in \mathcal{K},
\end{equation}
\begin{equation}\label{b}
\sum_{k=1}^K\sum_{f=1}^F \sum_{j=1}^4 x_{f,j}^k(1-x_{f,j}^k) \leq 0.
\end{equation}

Then, by substituting  $B^I_f(\textbf{x},\textbf{A}_n)$ and $B^O_f(\textbf{x},\textbf{A}_n)$ with (\ref{input2}) and (\ref{output2}), (\ref{binary1}) with (\ref{aa}) and (\ref{b}), respectively, Problem~\ref{approximate} is reformulated as Problem~\ref{dcproblem}.

\begin{Prob}[Equivalent DC Problem]\label{dcproblem}
\begin{align}
& \min_{\textbf{x}} \ \ \ \ \ \ \ \  \frac{1}{N}\sum_{n=1}^N \sum_{f=1}^F \left(B^I_f\left(\textbf{x},\textbf{A}_n\right) + B^O_f\left(\textbf{x},\textbf{A}_n\right)\right)\nonumber\\
& \ s.t. \ \ \ \ \ \  (\ref{schedule}),(\ref{cachesize2}),(\ref{energy2}),(\ref{ai}), (\ref{bi}), (\ref{ao}), (\ref{input2}), (\ref{output2}), (\ref{aa}), (\ref{b}).\nonumber%(\ref{input}),(\ref{output})
\end{align}
\end{Prob}

Note that Problem~\ref{dcproblem} is a continuous DC problem. However, (\ref{b}) is not a convex constraint, and thus obtaining an efficient algorithm for solving Problem~\ref{dcproblem} is still very challenging.

\subsubsection{Penalized formulation and CCCP algorithm} To facilitate the problem solving, we transform Problem~\ref{dcproblem} into Problem~\ref{penalized} by penalizing the concave constraints in (\ref{b}) to the objective function.

\begin{Prob}[Penalized Optimization]\label{penalized}
\begin{align}
& \min_{\textbf{x}} \ \ \  \frac{1}{N} \sum_{n=1}^N \sum_{f=1}^F \left(B^I_f\left(\textbf{x},\textbf{A}_n\right) + B^O_f\left(\textbf{x},\textbf{A}_n\right)\right)\nonumber\\
&\ \ \ \ \ \ \ -\rho \sum_{k=1}^K\sum_{f=1}^F\sum_{j=1}^4x_{f,j}^k(x_{f,j}^k-1)\nonumber \\
& s.t. \ \ \  (\ref{schedule}), (\ref{cachesize2}), (\ref{energy2}), (\ref{ai}), (\ref{bi}), (\ref{ao}), (\ref{input2}),(\ref{output2}),(\ref{aa}), \nonumber
%&\ \ \ \ \ a^I(f,\textbf{A}) \geq \frac{1}{\log\left(1+\frac{Ph_k^2}{\sigma^2}\right)}\textbf{1}(A_k=f)x_{f,3}^k,\nonumber\\
%&\ \ \ \ \ \ \ \ \ \ \ \ \ \ \ \ \ \ \ \ \ \ \ \ \ \ \ \ \ \ k\in \mathcal{K},\textbf{A} \in \mathcal{F}^K,f\in \mathcal{F},\nonumber\\
%&\ \ \ b^I(f,\textbf{A})\geq R_{f,3}^k\textbf{1}(A_k=f)x_{f,3}^k,\ k \in \mathcal{K}, \ \textbf{A} \in \mathcal{F}^K,f\in \mathcal{F},\nonumber\\
%&\ \ \ \ \  a^O(f,\textbf{A}) \geq \frac{1}{\log\left(1+\frac{Ph_k^2}{\sigma^2}\right)}\textbf{1}(A_k=f)x_{f,4}^k,\  \nonumber\\
%&\ \ \ \ \ \ \ \ \ \ \ \ \ \ \ \ \ \ \ \ \ \ \ \ \ \ \ \ \ \ k\in \mathcal{K},\textbf{A} \in \mathcal{F}^K,f\in \mathcal{F},\nonumber\\
%&\ \ \ b^O(f,\textbf{A})\geq R_{f,4}^k\textbf{1}(A_k=f)x_{f,4}^k,\ k \in \mathcal{K}, \ \textbf{A} \in \mathcal{F}^K,f\in \mathcal{F}, \nonumber
\end{align}
\textcolor{black}{with} the penalty parameter $\rho>0$.
\end{Prob}

%Note that the objective function of Problem~\ref{penalized} is a difference of two convex functions, and the constraints of Problem~\ref{penalized} are linear. Thus, Problem~\ref{penalized} is a DC problem.

Based on Theorem~5 and Theorem~8 in \cite{exactpenalty}, we show the equivalence between Problem~\ref{dcproblem} and Problem~\ref{penalized} in the following lemma.
%. By Theorem~1 in \cite{exactpenalty}, we obtain the following lemma.

\begin{lemma}[Exact Penalty]\label{exact} There exists $\rho_0>0$ such that when $\rho \geq  \rho_0$, Problem~\ref{penalized} and Problem~\ref{dcproblem} have the same optimal solution.
\end{lemma}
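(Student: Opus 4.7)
The plan is to apply the exact-penalty framework for DC programs developed in \cite{exactpenalty}. The key structural observation is that Problem~\ref{penalized} is obtained from Problem~\ref{dcproblem} by deleting the single nonconvex (concave) constraint (\ref{b}) and adding to the objective the term $\rho\, p(\textbf{x})$, where $p(\textbf{x}) \triangleq \sum_{k,f,j} x_{f,j}^k(1-x_{f,j}^k)$. On the box (\ref{aa}) this penalty is nonnegative and vanishes exactly when every $x_{f,j}^k \in \{0,1\}$, so feasibility of (\ref{b}) and ``penalty equal to zero'' are equivalent conditions on the remaining feasible set. Hence the two problems will coincide once the penalty is strong enough to force $p(\textbf{x}^{*}) = 0$ at every minimizer.

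First, I would assemble the ``relaxed'' feasible set $\mathcal{X}_{\mathrm{rel}}$ consisting of (\ref{schedule}), (\ref{cachesize2}), (\ref{energy2}), (\ref{aa}) together with the auxiliary-variable defining constraints (\ref{ai})--(\ref{output2}). The max-equalities (\ref{ai}), (\ref{bi}), (\ref{ao}) can be replaced by the corresponding pointwise linear inequalities without loss, since at any minimizer the objective pushes $a^I_{f,n}$, $b^I_{f,n}$, and $a^O_{f,n}$ down to those maxima; after this relaxation $\mathcal{X}_{\mathrm{rel}}$ becomes a compact convex polytope. The penalized objective then admits an explicit DC decomposition: the bandwidth term is already DC by (\ref{input2})--(\ref{output2}), and the penalty $\rho \sum_{k,f,j} x_{f,j}^k \;-\; \rho \sum_{k,f,j} (x_{f,j}^k)^2$ is linear minus convex quadratic.

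Next, I would verify the hypotheses of Theorems~5 and~8 of \cite{exactpenalty}: continuity of the DC objective on the compact $\mathcal{X}_{\mathrm{rel}}$, concavity of $p$, and nonemptiness of $\mathcal{X}_{\mathrm{rel}} \cap \{ p \leq 0 \}$. The last condition is immediate because the all-MEC-computing policy $x_{f,4}^k = 1$ and $x_{f,1}^k = x_{f,2}^k = x_{f,3}^k = 0$ for all $f, k$ is binary and satisfies (\ref{schedule}), (\ref{cachesize2}), (\ref{energy2}) trivially (it uses zero cache and zero computing energy). Theorem~5 then yields a finite threshold $\rho_{0} > 0$ such that for $\rho \geq \rho_{0}$ every global minimizer of Problem~\ref{penalized} satisfies $p(\textbf{x}^{*}) = 0$ and is therefore feasible, hence optimal, for Problem~\ref{dcproblem}; Theorem~8 supplies the converse direction, so the optimal values—and consequently the optimal solution sets—agree.

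The main obstacle I anticipate is purely in the bookkeeping of matching the abstract hypotheses of \cite{exactpenalty} to the concrete constraint structure at hand: specifically, justifying that the max-form equalities (\ref{ai})--(\ref{ao}) can be relaxed to convex inequalities so that $\mathcal{X}_{\mathrm{rel}}$ is genuinely convex, and exhibiting the explicit DC splitting of the penalized objective so that the threshold $\rho_{0}$ can be related to Lipschitz-type bounds on the two convex pieces over the compact $\mathcal{X}_{\mathrm{rel}}$. Once those translations are in place, the lemma follows as a direct corollary of the cited theorems.
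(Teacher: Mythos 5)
Your proposal follows the same route as the paper, which justifies the lemma purely by invoking Theorems~5 and~8 of \cite{exactpenalty}; you simply make explicit the hypothesis-checking (concavity and nonnegativity of the penalty $\sum_{k,f,j}x_{f,j}^k(1-x_{f,j}^k)$ on the box, equivalence of its vanishing with constraint (\ref{b}), polyhedrality/compactness of the relaxed feasible set after turning the max-equalities into inequalities, and nonemptiness via the all-MEC policy) that the paper leaves implicit. This is a correct and somewhat more careful rendering of the same argument.
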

%\begin{proof}Please see Appendix~B.
%\end{proof}
Lemma~\ref{exact} illustrates that Problem~\ref{penalized} is equivalent to Problem~\ref{dcproblem} if the penalty parameter $\rho$ is sufficiently large. Thus, in the sequel, we solve Problem~\ref{penalized} instead of Problem~\ref{dcproblem} by using CCCP to obtain the stationary point \cite{exactpenalty}. In general, CCCP involves iteratively solving a sequence of convex subproblems, each of which is obtained via linearizing the concave-term of the objective function of Problem~\ref{penalized}, i.e., replacing the concave parts with their first-order Taylor expansions. Specifically, in the $t$-th iteration, we need to solve
\begin{Prob}[CCCP Subproblem in the $t$-th Iteration]\label{subproblem}
\begin{align}
& \min_{\{\textbf{a}^I,\ \textbf{b}^I,\ \textbf{a}^O,\ \textbf{x}\}} \ \frac{1}{N} \sum_{n=1}^N\sum_{f=1}^F \Bigg[\frac{\left(a_{f,n}^I+b_{f,n}^I\right)^2}{4}+\frac{O_f}{\tau}a_{f,n}^O\nonumber\\
&\ \ \ \ \ \ \ \ \ \ \ \ \ \ \ \ \ \ -\frac{a_{f,n}^I(t)-b_{f,n}^I(t)}{2}\left(a_{f,n}^I-b_{f,n}^I\right)\Bigg]\nonumber\\
&\ \ \ \ \ \ \ \ \ \ \ \ \ \ \ \ \ \ - \rho\sum_{k=1}^K\sum_{f=1}^F\sum_{j=1}^4\left(2x_{f,j}^k(t)-1\right)x_{f,j}^k\nonumber \\
&\ \ \  \ \ s.t. \ \ \ \ \ \ \ \  (\ref{schedule}), (\ref{cachesize2}), (\ref{energy2}), (\ref{ai}), (\ref{bi}), (\ref{ao}), (\ref{aa}), \nonumber
%&\ \ \ \ \ a^I(f,\textbf{A}) \geq \frac{1}{\log\left(1+\frac{Ph_k^2}{\sigma^2}\right)}\textbf{1}(A_k=f)x_{f,3}^k,\nonumber\\
%&\ \ \ \ \ \ \ \ \ \ \ \ \ \ \ \ \ \ \ \ \ \ \ \ \ \ \ \ \ \ k\in \mathcal{K},\textbf{A} \in \mathcal{F}^K,f\in \mathcal{F},\nonumber\\
%&\ \ \ b^I(f,\textbf{A})\geq R_{f,3}^k\textbf{1}(A_k=f)x_{f,3}^k,\ k \in \mathcal{K}, \ \textbf{A} \in \mathcal{F}^K,f\in \mathcal{F},\nonumber\\
%&\ \ \ \ \  a^O(f,\textbf{A}) \geq \frac{1}{\log\left(1+\frac{Ph_k^2}{\sigma^2}\right)}\textbf{1}(A_k=f)x_{f,4}^k,\  \nonumber\\
%&\ \ \ \ \ \ \ \ \ \ \ \ \ \ \ \ \ \ \ \ \ \ \ \ \ \ \ \ \ \ k\in \mathcal{K},\textbf{A} \in \mathcal{F}^K,f\in \mathcal{F},\nonumber\\
%&\ \ \ b^O(f,\textbf{A})\geq R_{f,4}^k\textbf{1}(A_k=f)x_{f,4}^k,\ k \in \mathcal{K}, \ \textbf{A} \in \mathcal{F}^K,f\in \mathcal{F}, \nonumber
\end{align}
where $\textbf{a}^I \triangleq (a_{f,n}^I)_{f\in \mathcal{F},n\in \mathcal{N}}$,  $\textbf{b}^I \triangleq (b_{f,n}^I)_{f\in \mathcal{F},n\in \mathcal{N}}$, $\textbf{a}^O \triangleq (a_{f,n}^O)_{f\in \mathcal{F},n\in \mathcal{N}}$, and $\Big\{(a_{f,n}^I(t))_{f\in \mathcal{F},n\in \mathcal{N}},(b_{f,n}^I(t))_{f\in \mathcal{F},n\in \mathcal{N}},\\ (x_{f,j}^k(t))_{f\in \mathcal{F}, j\in \{1,2,3,4\}, k\in \mathcal{K}}\Big\}$ are the optimal solution obtained from the last iteration, and the penalty parameter $\rho>0$.
\end{Prob}

Note that Problem~\ref{subproblem} is a convex problem and can be solved via a general-purpose solver based on interior-point methods.  However, it may suffer from high computation complexity due to the large sample size $N$. In the following, we exploit the specific structure of Problem~\ref{subproblem} and find its optimal solution using an ADMM algorithm \cite{erkai}.
\subsubsection{ADMM algorithm for each CCCP subproblem}
First, we introduce a set of consensus constraints $x_{f,j}^{k,n} = x_{f,j}^k,\ f\in \mathcal{F},\ j\in \{1,2,3,4\},\ k\in \mathcal{K},\ n\in \mathcal{N} $, and reformulate Problem~\ref{subproblem} as
\begin{Prob}[Equivalent CCCP Subproblem in the $t$-th Iteration]\label{subproblem1}
\begin{align}
& \min_{\{\textbf{a}^I,\ \textbf{b}^I,\ \textbf{a}^O,\ \textbf{x}\}} \ \frac{1}{N} \sum_{n=1}^N\sum_{f=1}^F \Bigg[\frac{\left(a_{f,n}^I+b_{f,n}^I\right)^2}{4}+\frac{O_f}{\tau}a_{f,n}^O\nonumber\\
&\ \ \ \ \ \ \ \ \ \ \ \ \ \ \ \ \  -\frac{a_{f,n}^I(t)-b_{f,n}^I(t)}{2}\left(a_{f,n}^I-b_{f,n}^I\right)\Bigg]\nonumber\\
&\ \ \ \ \ \ \ \ \ \ \ \ \ \ \ \  - \rho\frac{1}{N}\sum_{n=1}^N\sum_{k=1}^K\sum_{f=1}^F\sum_{j=1}^4\left(2x_{f,j}^k(t)-1\right)x_{f,j}^{k,n}\label{objective} \\
&\ \ \  \ \ s.t. \ \ \ \ \ \ \ \  (\ref{schedule}), (\ref{cachesize2}), (\ref{energy2}), (\ref{aa}), \nonumber\\
&\ \ \ \ \ \ \ \ \ \ \ a^I_{f,n} \geq \frac{1}{\log\left(1+\frac{Ph_k^2}{\sigma^2}\right)}\textbf{1}(A_{n,k}=f)x_{f,3}^{k,n},\nonumber\\
&\hspace{40mm}k\in \mathcal{K},\ f\in \mathcal{F}, \ n \in \mathcal{N},\label{ai1}\\
&\ \ \ \ \ \ \ \ \ \ \ b^I_{f,n} \geq R_{f,3}^k \textbf{1}(A_{n,k}=f) x_{f,3}^{k,n},\nonumber\\
&\hspace{40mm} k\in \mathcal{K},\  f \in \mathcal{F}, \ n \in \mathcal{N},\label{bi1}\\
&\ \ \ \ \ \ \ \ \ \ \ a^O_{f,n} \geq \frac{1}{\log\left(1+\frac{Ph_k^2}{\sigma^2}\right)}\textbf{1}(A_{n,k}=f)x_{f,4}^{k,n},\nonumber\\
&\hspace{40mm} k \in \mathcal{K},\ f\in \mathcal{F},\ n \in \mathcal{N}, \label{ao1}\\
&\ \ \ \ \ \ \ \ \ \ \ x_{f,j}^{k,n} = x_{f,j}^k,\nonumber\\
&\ \ \ \ \ \ \ \ \ \ \ \ \ \ \  f\in \mathcal{F},\ j\in \{1,2,3,4\},\ k\in \mathcal{K},\ n\in \mathcal{N}, \label{consensus}
\end{align}
where $\textbf{x} = (x_{f,j}^k)_{f\in \mathcal{F}, j\in \{1,2,3,4\},k\in \mathcal{K}}$ in constraints (\ref{ai}), (\ref{bi}) and (\ref{ao}) is replaced with $\{\textbf{x}^n\}_{n\in \mathcal{N}} \triangleq (x_{f,j}^{k,n})_{f\in \mathcal{F}, j\in \{1,2,3,4\},k\in \mathcal{K},n\in \mathcal{N}}$ in constraints (\ref{ai1}), (\ref{bi1}) and (\ref{ao1}).
\end{Prob}

Then, drop the constant $1/N$ in (\ref{objective}) and we obtain the partial augmented Lagrangian of Problem~\ref{subproblem1} via moving the consensus constraint (\ref{consensus}) to the objective function of Problem~\ref{subproblem1} as follows:
\begin{align}
&\mathcal{L}_{\gamma} \left(\textbf{a}^I,\textbf{b}^I,\textbf{a}^O, (\textbf{x}^n)_{n\in \mathcal{N}}, \textbf{x}; (\mathbf{\lambda}^n)_{n\in \mathcal{N}}\right)\nonumber\\
&\triangleq  \sum_{n=1}^N\sum_{f=1}^F \Bigg[\frac{\left(a_{f,n}^I+b_{f,n}^I\right)^2}{4}+\frac{O_f}{\tau}a_{f,n}^O\nonumber\\
&\ \ \ \ \ \ \ \ \ \ \ \ \  -\frac{a_{f,n}^I(t)-b_{f,n}^I(t)}{2}\left(a_{f,n}^I-b_{f,n}^I\right)\Bigg]\nonumber\\
& - \rho \sum_{n=1}^N\sum_{k=1}^K\sum_{f=1}^F\sum_{j=1}^4\left(2x_{f,j}^k(t)-1\right)x_{f,j}^{k,n}\nonumber \\
&+ \sum_{n=1}^N \sum_{k=1}^K \sum_{f=1}^F \sum_{j=1}^4 \left [ \lambda_{f,j}^{k,n}\left(x_{f,j}^{k,n}-x_{f,j}^k\right)+\frac{\gamma}{2}\left(x_{f,j}^{k,n}-x_{f,j}^k\right)^2\right],\label{augmented}
\end{align}
where $\mathbf{\lambda}^n \triangleq (\lambda_{f,j}^{k,n})_{f\in \mathcal{F}, j\in \{1,2,3,4\}, k \in \mathcal{K}}$, $\lambda_{f,j}^{k,n}$ is the Lagrangian multiplier corresponding to the constraint $x_{f,j}^{k,n} = x_{f,j}^k$ and $\gamma>0$ is the penalty parameter.

In general, ADMM involves iteratively updating the primal varaibles via minimizing the augmented Lagrangian (\ref{augmented}),  and then updating the Lagrangian multiplier. In particular, ADMM updates the variables at iteration $q+1$ according to the following three steps \cite{liuyafeng,erkai,admm}:
\begin{itemize}
\item \textit{$\left\{\textbf{a}^I, \textbf{b}^I, \textbf{a}^O, (\textbf{x}^n)_{n\in \mathcal{N}}\right\}$  Update}. Given $\left\{\textbf{x}, (\mathbf{\lambda}^n)_{n\in \mathcal{N}}\right\}^q$ obtained from iteration $q$, update $\left\{\textbf{a}^I, \textbf{b}^I, \textbf{a}^O, (\textbf{x}^n)_{n\in \mathcal{N}}\right\}$ for iteration $q+1$ as the solution to the following problem:
\begin{align}
&\min_{\left\{\textbf{a}^I, \textbf{b}^I, \textbf{a}^O, (\textbf{x}^n)_{n\in \mathcal{N}}\right\}}  \mathcal{L}_{\gamma} \Big(\textbf{a}^I,\textbf{b}^I,\textbf{a}^O, (\textbf{x}^n)_{n\in \mathcal{N}}, \{\textbf{x}\}^q;\nonumber\\
& \ \ \ \ \ \ \ \ \ \ \ \ \ \ \ \ \ \ \ \ \ \ \ \ \ \ \ \ \ \ \ \{(\mathbf{\lambda}^n)_{n\in \mathcal{N}}\}^q \Big)\nonumber\\
&\ \ \ \ \ \ \ \ \ s.t. \ \ \ \ \ \ \ \ \ \ \ (\ref{ai1}),\ (\ref{bi1}),\ (\ref{ao1}).\nonumber
\end{align}
\item \textit{$\textbf{x}$ Update}. Given $\left\{\textbf{a}^I, \textbf{b}^I, \textbf{a}^O, (\textbf{x}^n)_{n\in \mathcal{N}}\right\}^{q+1}$ obtained from iteration $q+1$ \textcolor{black}{and $\left\{ (\mathbf{\lambda}^n)_{n\in \mathcal{N}}\right\}^q$ obtained from iteration $q$}, update $\textbf{x}$ for iteration $q+1$ as the solution to the following problem:
\begin{align}
&\min_{\textbf{x}} \  \mathcal{L}_{\gamma} \Big(\left\{\textbf{a}^I,\textbf{b}^I,\textbf{a}^O, (\textbf{x}^n)_{n\in \mathcal{N}}\right\}^{q+1}, \textbf{x}; \{(\mathbf{\lambda}^n)_{n\in \mathcal{N}}\}^q \Big)\nonumber\\
%& \ \ \ \ \ \ \ \ \ \ \ \ \ \ \ \ \ \ \ \ \ \ \ \ \ \ \ \ \ \ \ \{(\mathbf{\lambda}^n)_{n\in \mathcal{N}}\}^q \Big)\nonumber\\
&\ s.t. \ \ \ \ \ \ \ \ \ \ \ (\ref{schedule}), (\ref{cachesize2}), (\ref{energy2}), (\ref{aa}).\nonumber
\end{align}
\item \textit{$(\mathbf{\lambda}^n)_{n\in \mathcal{N}}$ Update}. Given $\left\{\textbf{a}^I, \textbf{b}^I, \textbf{a}^O, (\textbf{x}^n)_{n\in \mathcal{N}},\textbf{x}\right\}^{q+1}$ obtained from iteration $q+1$, update $(\mathbf{\lambda}^n)_{n\in \mathcal{N}}$ for iteration $q+1$ according to:
\begin{align}
&\left\{\lambda_{f, j}^{k,n}\right\}^{q+1} = \left\{\lambda_{f, j}^{k,n}\right\}^{q} + \gamma\left(\left\{x_{f,j}^{k,n}\right\}^{q+1} - \left\{x_{f,j}^{k}\right\}^{q+1}\right),\nonumber\\
&\ \ \ \ \ \ f\in \mathcal{F},\ j\in \{1,2,3,4\},\ k\in \mathcal{K},\ n\in \mathcal{N}.
\end{align}
\end{itemize}

For the update of $\left\{\textbf{a}^I, \textbf{b}^I, \textbf{a}^O, (\textbf{x}^n)_{n\in \mathcal{N}}\right\}$, the optimization problem is decoupled among the $N$ request realizations and the $F$ files, and into $NF$ subproblems. For each $n \in \mathcal{N}$ and $f\in \mathcal{F}$, we solve the following subproblem:
\begin{align}
&\min_{\left\{a_{f,n}^I,b_{f,n}^I,a_{f,n}^O, \textbf{x}_f^n\right\}}  \frac{\left(a_{f,n}^I+b_{f,n}^I\right)^2}{4}+\frac{O_f}{\tau}a_{f,n}^O\nonumber\\
&\ \ \ \ \ \ \ \  -\frac{a_{f,n}^I(t)-b_{f,n}^I(t)}{2}\left(a_{f,n}^I-b_{f,n}^I\right)\nonumber\\
&\ \ \ \ \ \ \ \ - \rho \sum_{k=1}^K\sum_{j=1}^4\left(2x_{f,j}^k(t)-1\right)x_{f,j}^{k,n}\nonumber \\
&\ \ \ \ \ \ \ \ +  \sum_{k=1}^K \sum_{j=1}^4 \left [\left\{ \lambda_{f,j}^{k,n}\right\}^q x_{f,j}^{k,n}+\frac{\gamma}{2}\left(x_{f,j}^{k,n}-\left\{x_{f,j}^k\right\}^q\right)^2\right]\nonumber\\
&\ \ \ \ \ \ \ s.t. \ a^I_{f,n} \geq \frac{1}{\log\left(1+\frac{Ph_k^2}{\sigma^2}\right)}\textbf{1}(A_{n,k}=f)x_{f,3}^{k,n},\ k\in \mathcal{K},\nonumber\\
%&\hspace{40mm}k\in \mathcal{K},\label{ai1}\\
&\ \ \ \ \ \ \ \ \ \ \ \ b^I_{f,n} \geq R_{f,3}^k \textbf{1}(A_{n,k}=f) x_{f,3}^{k,n},\ k\in \mathcal{K},\nonumber\\
%&\hspace{40mm} k\in \mathcal{K},\label{bi1}\\
& \ \ \ \ \ \ \ \ \ \ \ \ a^O_{f,n} \geq \frac{1}{\log\left(1+\frac{Ph_k^2}{\sigma^2}\right)}\textbf{1}(A_{n,k}=f)x_{f,4}^{k,n},\ k\in \mathcal{K}. \nonumber
%&\hspace{40mm} k \in \mathcal{K}, \label{ao1}
\end{align}

Based on KKT conditions, the closed-form expression for the optimal solution $\left\{\textbf{a}^I, \textbf{b}^I, \textbf{a}^O, (\textbf{x}^n)_{n\in \mathcal{N}}\right\}^{q+1} $ is given by
\begin{align}
&a^I_{f,n} = \max_{k\in \mathcal{K}} \frac{1}{\log\left(1+\frac{Ph_k^2}{\sigma^2}\right)}\textbf{1}(A_{n,k}=f)\left\{x_{f,3}^{k,n}\right\}^{q+1},\ \nonumber\\
%&\hspace{40mm}k\in \mathcal{K},\label{ai1}\\
& b^I_{f,n} = \max_{k\in \mathcal{K}} R_{f,3}^k \textbf{1}(A_{n,k}=f) \left\{x_{f,3}^{k,n}\right\}^{q+1},\nonumber\\
%&\hspace{40mm} k\in \mathcal{K},\label{bi1}\\
&a^O_{f,n} = \max_{k\in \mathcal{K}} \frac{1}{\log\left(1+\frac{Ph_k^2}{\sigma^2}\right)}\textbf{1}(A_{n,k}=f)\left\{x_{f,4}^{k,n}\right\}^{q+1}, \nonumber
\end{align}
\begin{align}
&\left\{x_{f,j}^{k,n}\right\}^{q+1} = \frac{\rho}{\gamma} \left(2x_{f,j}^k(t)-1\right)-\frac{1}{\gamma} \left\{\lambda_{f,j}^{k,n}\right\}^q + \left\{x_{f,j}^k\right\}^q,\nonumber\\
&\hspace{54mm} j\in \{1,2\},\ k\in \mathcal{K}, \label{12}\nonumber\\
&\left\{x_{f,3}^{k,n}\right\}^{q+1} = \frac{\rho}{\gamma} \left(2x_{f,3}^k(t)-1\right)-\frac{1}{\gamma} \left\{\lambda_{f,3}^{k,n}\right\}^q + \left\{x_{f,3}^k\right\}^q\nonumber\\
&\hspace{17mm} - \frac{u_{1,k}}{\gamma}R_{k,f}^3 \textbf{1}\left(A_{n,k}=f\right)\nonumber\\
&\hspace{17mm}-\frac{u_{2,k}}{\gamma}\frac{1}{\log\left(1+\frac{Ph_k^2}{\sigma^2}\right)}\textbf{1}\left(A_{n,k}=f\right),\ k \in \mathcal{K},\nonumber\\
&\left\{x_{f,4}^{k,n}\right\}^{q+1} = \frac{\rho}{\gamma} \left(2x_{f,4}^k(t)-1\right)-\frac{1}{\gamma} \left\{\lambda_{f,4}^{k,n}\right\}^q + \left\{x_{f,4}^k\right\}^q\nonumber\\
&\hspace{17mm}-\frac{u_{3,k}}{\gamma}\frac{1}{\log\left(1+\frac{Ph_k^2}{\sigma^2}\right)}\textbf{1}\left(A_{n,k}=f\right),\ k\in \mathcal{K},\nonumber
\end{align}
where $(u_{1,k})_{k\in \mathcal{K}}$, $(u_{2,k})_{k\in \mathcal{K}}$ and $(u_{3,k})_{k\in \mathcal{K}}$ are given by
\begin{align}
u_{1,k} =
&\left\{
\begin{array}{l}
0,\ \ \ \ \ \ \ \ k \neq k^*_1,\\
u_{1,k_1^*},\ \ \  k = k^*_1,
\end{array}
\right.
\end{align}
\begin{align}
u_{2,k} =
&\left\{
\begin{array}{l}
0,\ \ \ \ \ \ \ \ k \neq k^*_2,\\
u_{2,k_2^*},\ \ \ k = k^*_2,
\end{array}
\right.
\end{align}
\begin{align}
u_{3,k} =
&\left\{
\begin{array}{l}
0,\ \ \ \ \ \ \ \ k \neq k^*_3,\\
\frac{O_f}{\tau},\ \ \ \ \ \ k = k^*_3,
\end{array}
\right.
\end{align}
with $k_1^* \triangleq \arg \max_{k\in \mathcal{K}} R_{k,f}^3\textbf{1}\left(A_{n,k}=f\right)\left\{x_{f,3}^{k,n}\right\}^{q+1}$, $k_2^* \triangleq \arg \max_{k\in \mathcal{K}} \frac{1}{\log\left(1+\frac{Ph_k^2}{\sigma^2}\right)}\textbf{1}\left(A_{n,k}=f\right) \left\{x_{f,3}^{k,n}\right\}^{q+1}$, and $u_{1,k_1^*}$ and $u_{2,k_2^*}$ satisfying
\begin{align}
&\frac{R_{k_1^*,f}^3\textbf{1}\left(A_{n,k_1^*}=f\right)}{2}\left\{x_{f,3}^{k_1^*,n}\right\}^{q+1} \nonumber\\
&+ \frac{\textbf{1}\left(A_{n,k_2^*}=f\right)}{2\log\left(1+\frac{Ph_{k_2^*}^2}{\sigma^2}\right)}\left\{x_{f,3}^{k_2^*,n}\right\}^{q+1}-\frac{a_{f,n}^I(t)-b_{f,n}^I(t)}{2}\nonumber\\
& = u_{1,k_1^*},
\end{align}
and $k_3^* \triangleq \arg \max_{k\in \mathcal{K}} \frac{1}{\log\left(1+\frac{Ph_k^2}{\sigma^2}\right)}\textbf{1}\left(A_{n,k}=f\right) \left\{x_{f,4}^{k,n}\right\}^{q+1}$.

Note that $k_1^*$, $k_2^*$ and $k_3^*$ can be determined via brute-force search on the complexity order $\mathcal{O}(K^3)$, and then $\left\{\textbf{a}^I, \textbf{b}^I, \textbf{a}^O, (\textbf{x}^n)_{n\in \mathcal{N}}\right\}^{q+1} $ is obtained directly.

For the update of $\textbf{x}$, the optimization problem is decomposed among the request realizations and the $K$ users, and into $NK$ subproblems. For each $n\in \mathcal{N}$ and $k\in \mathcal{K}$, we solve the following subproblem:
\begin{align}
&\min_{(x_{f,j}^k)_{f\in \mathcal{F}, j\in \{1,2,3,4\}}}\ \sum_{f=1}^F \sum_{j=1}^4 \Bigg[ \lambda_{f,j}^{k,n}\left(x_{f,j}^{k,n}-\left\{x_{f,j}^k\right\}^q\right)\nonumber\\
&\hspace{30mm}+\frac{\gamma}{2}\left(x_{f,j}^{k,n}-\left\{x_{f,j}^k\right\}^q\right)^2\Bigg] \nonumber \\
&\ \ \ \ \ \ \ \ \ s.t.\ \ \ \ \ \ \ \ \ \ \ (\ref{schedule}), (\ref{cachesize2}), (\ref{energy2}), (\ref{aa}). \nonumber
\end{align}

Based on KKT conditions, we can obtain the optimal solution $(x_{f,j}^k)_{f\in \mathcal{F}, j\in \{1,2,3,4\}} $ in the closed-form expression as following
\begin{align}
&\left\{x_{f,1}^k\right\}^{q+1} = \min \left\{\max \left\{S_{k,1}^k - \frac{\lambda_1}{2}O_f - \frac{\mu_f}{2}, 0\right\}, 1\right\}, \nonumber\\
&\left\{x_{f,2}^k\right\}^{q+1} = \min \Bigg\{\max \Big\{S_{k,2}^k - \frac{\lambda_1}{2}I_f -\frac{\lambda_2}{2}P_f\mu I_f w_f f_k^2\nonumber\\
&\hspace{40mm}- \frac{\mu_f}{2}, 0\Big\}, 1 \Bigg\}, \nonumber\\
&\left\{x_{f,3}^k\right\}^{q+1} = \min \Bigg\{\max \Big\{S_{k,3}^k -\frac{\lambda_2}{2}P_f\mu I_f w_f f_k^2- \frac{\mu_f}{2}, 0\Big\}, \nonumber\\
&\hspace{45mm}1 \Bigg\}, \nonumber\\
&\left\{x_{f,4}^k\right\}^{q+1} = \min \left\{\max \left\{S_{k,4}^k - \frac{\mu_f}{2}, 0\right\}, 1\right\}, \nonumber
\end{align}
where $\lambda_1$, $\lambda_2$ and $(\mu_f)_{f\in \mathcal{F}}$ are the optimal Lagrangian multipliers satisfying (\ref{schedule}), $\sum_{f=1}^F O_fx_{f,1}^k+I_fx_{f,2}^k = C_k$ and $\sum_{f=1}^F P_f\mu I_fw_ff_k^2(x_{f,2}^k+x_{f,3}^k) = \bar{E}_k$.

Based on Section 3.2 in \cite{admm} and Proposition~15 in \cite{admm2}, the above mentioned ADMM is guaranteed to converge to the optimal solution to Problem~\ref{subproblem1}. Thus, our proposed CCCP-ADMM algorithm, as illustrated in Algorithm~\ref{dca}, converges to a stationary point of our original problem. \textcolor{black}{Based on CCCP-ADMM, we can see that the computation complexity is reduced from $\mathcal{O}\left(2^{3K}\right)$ to $\mathcal{O}\left(N\max\{F,K\}\right)$.
}

\begin{algorithm}[t]
 %\floatname{algorithm}{Online Decentralized Algorithm (ODA)}
 %\setcounter{algorithm}{0}
 \caption{CCCP-ADMM algorithm}
\label{dca}
\small{\begin{algorithmic}[1]
%\STATe \textbf{Input}. $K,F,\{S_f, \forall f\},\{C_k, \forall k\}$
%\STATe \textbf{Output}. Cache placement \textbf{C}
\STATE \textbf{Initialization}. Find an initial feasible point $\textbf{x}^{(0)}$  of Problem~\ref{subproblem1} and set $t=0$.
\STATE \textbf{Repeat}
\STATE  Obtain $\textbf{x}^{(t+1)}$ the optimal solution to the $t+1$-th subproblem via ADMM algorithm.
\STATE Set $t\leftarrow t+1$.
\STATE \textbf{until} $\Big[G\left(t-1\right)-H(t-1;t-2)\Big]- \left[G \left(t\right) - H \left(t;t-1\right)\right] \leq \delta$,
where $G\left(t\right) \triangleq  \frac{1}{N} \sum_{n=1}^N\sum_{f=1}^F \frac{\left(a_{f,n}^I(t)+b_{f,n}^I(t)\right)^2}{4}+\frac{O_f}{\tau}a_{f,n}^O(t)$ and $H(t;t-1) \triangleq \frac{1}{N}\sum_{n=1}^N\sum_{f=1}^F\frac{a_{f,n}^I(t-1)-b_{f,n}^I(t-1)}{2}\left(a_{f,n}^I(t)-b_{f,n}^I(t)\right)+\rho\frac{1}{N}\sum_{n=1}^N\sum_{k=1}^K\sum_{f=1}^F\sum_{j=1}^4\left(2x_{f,j}^k(t-1)-1\right)x_{f,j}^{k,n}(t)$.% -\frac{a_{f,n}^I(t)-b_{f,n}^I(t)}{2}\left(a_{f,n}^I-b_{f,n}^I\right)\bigg]$ and $a$
\end{algorithmic}}
\end{algorithm}

\section{Bandwidth Gain Analysis}
In this section, we analyze the symmetric scenario to gain more design insights, i.e., for all $f\in \mathcal{F}, \ k\in \mathcal{K}$, $(I_f,w_f,O_f) = (I,w,O)$, $P_{k,f} = \frac{1}{F}$, $f_k=f_1$, $C_k=C$, $\bar{E}_k = \bar{E}$ and $h_k=h$. Accordingly, we have $R_{f,3}^k = R_3$ and $R_{f,4}^k = R_4$, where $R_3 \triangleq \frac{I}{\tau-\frac{Iw}{f_1}}$ and $R_4 \triangleq \frac{O}{\tau}$, for all $f\in \mathcal{F}$ and $k\in \mathcal{K}$. \textcolor{black}{For notation convenience, we define $\beta_{c} \triangleq \frac{C}{FO} \leq 1$, which represents the normalized cache size at each device with respect to the total output data size of all the tasks; denote $\beta_{e} \triangleq \frac{\bar{E}}{\mu Iwf_1^2}\leq 1$, which represents the normalized average energy at each device with respect to the total average energy of all the tasks.}%maximum number of tasks that can be computed locally at each mobile device subject to the given energy constraint $\bar{E}$.  %interest of design, we assume that $C \leq OF$ and $\frac{F\bar{E}}{\mu Iwf_1^2} \leq F$.
\subsection{Optimal Policy}
First, by analyzing the structure of the problem, we obtain the optimal policy in the symmetric scenario, given as below.
\begin{lemma}[Optimal policy in symmetric scenario]\label{symmetric}
For all $k\in \mathcal{K}$,
\begin{equation}\label{optB1}
x_{f,1}^{k,*} =
\begin{cases}
1,& f = 1,\cdots,n_1, \\
0,& \text{otherwise},
\end{cases}
\end{equation}
where $n_1 \triangleq F\max \left\{\beta_c-\min\left\{\beta_c,\frac{1}{\alpha}\beta_e\right\}\textbf{1}(\alpha>1),0\right\}$,%$n_1 \triangleq \max\left\{\frac{C-\min\left\{C,\frac{F\bar{E}}{\mu w f_1^2}\right\}\textbf{1}(\alpha>1)}{O},0\right\}$,
\begin{equation}\label{optB2}
x_{f,2}^{k,*} =
\begin{cases}
1,& f=n_1+1,\cdots,n_1+n_2,\\
0,&\text{otherwise},
\end{cases}
\end{equation}
where $n_2 \triangleq F\min \left\{\alpha \beta_c,\beta_e\right\}\textbf{1}(\alpha>1)$,
\begin{equation}\label{optC}
x_{f,3}^{k,*} =
\begin{cases}
1,& f=n_1+n_2+1,\cdots,n_1+n_2+n_3,\\
0,&\text{otherwise},
\end{cases}
\end{equation}
where $n_3\! \triangleq\! F\left(\beta_e\!-\! \min \left\{\alpha \beta_c,\beta_e\right\}\right)\textbf{1}\left(\alpha\!>1, f_1>\!\frac{Iw}{(1\!-\!\frac{1}{\alpha})\tau}\!\right)$,% if $f_1 \leq \frac{Iw}{(1-\frac{1}{\alpha})\tau}$, and $n_3 \triangleq \frac{F\bar{E}}{\mu Iwf_1^2}\textbf{1}(\alpha>1)$ otherwise,
\begin{equation}
x_{f,4}^{k,*} =
\begin{cases}
1,&f = n_1+n_2+n_3+1,\cdots,F,\\
0,& \text{otherwise}.
\end{cases}
\end{equation}
\end{lemma}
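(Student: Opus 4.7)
My plan is to exploit the full symmetry of the scenario and reduce the combinatorial problem to a small linear program over four non-negative integer counts $n_1, n_2, n_3, n_4$ giving the number of tasks assigned to each route.

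I would first perform a \emph{symmetry reduction}. Because the task parameters $(I_f, w_f, O_f)$, demand probabilities, budgets and channel gains are identical across tasks and users, any permutation of task labels or user labels applied to an optimal policy yields another optimal policy. An averaging / exchange argument then shows the existence of an optimal policy in which every user adopts the same pattern and in which tasks are interchangeable within each route class. Therefore it suffices to search over $(n_1, n_2, n_3, n_4)$ with $\sum_j n_j = F$; the particular index ordering stated in the lemma is arbitrary by symmetry.

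Under this symmetric ansatz the objective admits a closed form. A task in Route 1 or Route 2 consumes no bandwidth, while a task in Route $j\in\{3,4\}$ consumes multicast bandwidth $R_j/\log(1+Ph^2/\sigma^2)$ if and only if at least one of the $K$ i.i.d.\ uniform user requests selects it, which occurs with probability $1-(1-1/F)^K$. Hence
\[
B(\textbf{x}) = \frac{(n_3 R_3 + n_4 R_4)\bigl(1-(1-1/F)^K\bigr)}{\log\bigl(1+Ph^2/\sigma^2\bigr)}.
\]
Substituting $n_4 = F - n_1 - n_2 - n_3$, minimizing $B$ is equivalent to maximizing $n_1 R_4 + n_2 R_4 + n_3(R_4 - R_3)$ subject to the cache constraint $n_1 O + n_2 I \leq C$, the energy constraint $n_2 + n_3 \leq F\beta_e$, the total-task constraint $n_1 + n_2 + n_3 \leq F$, and $n_j \geq 0$.

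Finally, I would solve this LP by case analysis on the sign of $R_4 - R_3$ and on which budget binds. When $\alpha \leq 1$ we have $R_3 \geq R_4$ and $I \geq O$, so Route 3 is strictly dominated by Route 4 and Route 2 by Route 1; thus $n_2 = n_3 = 0$ and only the cache binds, yielding $n_1 = F\beta_c$. When $\alpha > 1$, since $I < O$, a one-step exchange argument shows that within the zero-bandwidth block Route 2 is strictly more cache-efficient than Route 1, so Route 2 should be filled up to the tighter of the per-unit cache cap $C/I = F\alpha\beta_c$ and the energy cap $F\beta_e$, giving $n_2 = F\min(\alpha\beta_c, \beta_e)$; any residual cache is absorbed by Route 1, producing $n_1 = F\max(\beta_c - \beta_e/\alpha, 0)$. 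Residual energy is profitable to allocate to Route 3 precisely when $R_4 - R_3 > 0$, i.e.\ $f_1 > Iw/[(1-1/\alpha)\tau]$, yielding $n_3 = F(\beta_e - \min(\alpha\beta_c, \beta_e))$; otherwise $n_3 = 0$. The step I expect to be the main obstacle is the rigorous symmetry reduction in the first paragraph: one must verify that the permutation-averaging of the integer decision variables yields a feasible (and structurally integer) policy whose expected bandwidth is no larger. After this, verifying the closed-form bandwidth expression and enumerating the LP subcases is routine bookkeeping, with the threshold $f_1 = Iw/[(1-1/\alpha)\tau]$ marking where Route 3 begins to be useful.
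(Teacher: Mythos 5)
Your proposal follows the same architecture as the paper's Appendix~B: reduce by symmetry to the four counts $(n_1,n_2,n_3,n_4)$, evaluate the objective in closed form as $\frac{1-(1-1/F)^K}{\log(1+Ph^2/\sigma^2)}\left(n_3R_3+n_4R_4\right)$ using the collision probability of uniform i.i.d.\ requests, and solve the resulting linear program by case analysis; your LP, its constraints, and the threshold $f_1=\frac{Iw}{(1-\frac{1}{\alpha})\tau}$ all coincide with the paper's Problem~10 and its solution. The one step you correctly flag as the obstacle is also the one place your sketch would not survive as written: a ``permutation-averaging'' reduction fails here because the variables are $0$--$1$ and the objective is a sum of expected maxima, so the average of permuted optimal policies is neither integral nor guaranteed to preserve the objective value (the objective is not linear in $\textbf{x}$). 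The paper instead uses a pairwise exchange argument: it fixes the canonical block policy $\check{\textbf{x}}$ and shows that any deviation $\bar{\textbf{x}}$ obtained by swapping the routes of two tasks at a single user $k'$ satisfies $B(\bar{\textbf{x}})-B(\check{\textbf{x}})\ge 0$, the surplus being nonnegative terms of the form $R_j\max_{k\in\mathcal{K}\setminus k'}\textbf{1}(A_k=f)\,\bar{x}_{f,j}^k$ --- in other words, aligning all users' Route-3/Route-4 assignments on the same tasks maximizes multicast overlap, which is precisely the coupling your averaging step would have to control. Substituting that exchange argument for your first paragraph, the remainder of your derivation (dominance of Route~1 over Routes~2,3 when $\alpha\le 1$; cache-efficiency of Route~2 over Route~1 and the sign of $R_4-R_3$ when $\alpha>1$) is correct and matches the paper's conclusion.
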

\begin{proof}
Please see Appendix B.
\end{proof}
From Lemma~\ref{symmetric}, note that when $\alpha \leq 1$, $x_{f,2}^{k,*}=x_{f,3}^{k,*}=0$ for all $k\in \mathcal{K}$ and $f\in \mathcal{F}$, meaning that joint local input caching and computing does not bring any bandwidth gain, and  the caching resources at all the mobile devices are utilized merely for output caching. \textcolor{black}{On the other hand, when $\alpha > 1$, from (\ref{optB1}) and (\ref{optB2}), we can see that caching at each mobile device is exploited for input caching first and then output caching if there still remains underutilized caching. From (\ref{optB2}) and (\ref{optC}), we can see that computing at each mobile device is exploited from local computing with local caching first and then local computing only if there still remains underutilized computing resource and also local computing frequency is large enough.}
\subsection{Bandwidth Gain from Local Caching and Computing}
Next, we analytically quantify the gain on the bandwidth requirement that caching and computing resources at the mobile devices can bring over MEC computing, i.e., the outputs of all the tasks are transmitted from the MEC server. Denote with $B_{MEC}^*$ the minimum bandwidth requirement via MEC computing. Based on Lemma~\ref{symmetric}, we obtain the following \textcolor{black}{theorem}.
\begin{theorem}[Bandwidth Gain from Local Caching and Computing]\label{MECgain}
When $\alpha \leq 1$, we have
\begin{equation}
\frac{B^*}{B_{MEC}^*} = 1-\beta_c,
%\frac{B_{MEC}^*}{B^*} = \frac{F}{F-\frac{C}{O}},
\end{equation}
%\begin{equation}
%\frac{B^*}{B_{MEC}^*} = 1-\alpha_0,
%\end{equation}
which decreases with $C$ but is independent of $f_1$.

When $\alpha > 1$ and $f_1 \geq \sqrt{\frac{F\bar{E}}{\mu wC}}$, we have
\begin{equation}
\frac{B^*}{B_{MEC}^*} = 1-\beta_c-\left(1-\frac{1}{\alpha}\right)\beta_e,%\frac{F}{F-\frac{C}{O}-(\alpha -1) \frac{F\bar{E}}{\mu Owf_1^2}},
%\frac{B_{MEC}^*}{B^*} = \frac{F}{F-\frac{C}{O}-(\alpha -1) \frac{F\bar{E}}{\mu Owf_1^2}},
\end{equation}
which decreases with $C$ and increases with $f_1$.

When $\alpha > 1$ and $\frac{Iw}{(1-\frac{1}{\alpha})\tau} < f_1 < \sqrt{\frac{F\bar{E}}{\mu wC}}$, we have
\begin{equation}\label{gain2}
\frac{B^*}{B_{MEC}^*}  = 1 - \alpha \beta_c - \left(1-\frac{\tau}{\alpha(\tau-\frac{Iw}{f_1})} \right)\left(\beta_e-\alpha \beta_c\right),
%\frac{B_{MEC}^*}{B^*} = \frac{F}{F-\frac{F\bar{E}}{\mu Iwf_1^2} + \frac{\tau}{\alpha(\tau-\frac{Iw}{f_1})} \left(\frac{F\bar{E}}{\mu Iwf_1^2}-\frac{C}{I}\right)},
\end{equation}
which decreases with $C$ and first decreases and then increases with $f_1$.

When $\alpha > 1$ and  $f_1 \leq \min \left\{\frac{Iw}{(1-\frac{1}{\alpha})\tau}, \sqrt{\frac{F\bar{E}}{\mu wC}}\right\}$, we also have
\begin{equation}\label{gain3}
\frac{B^*}{B_{MEC}^*} = 1- \alpha \beta_c,%\frac{F}{F-\frac{C}{I}},
\end{equation}
which decreases with $C$ and is independent of $f_1$.
\end{theorem}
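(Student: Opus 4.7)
The plan is to leverage the explicit characterization of the optimal policy from Lemma~\ref{symmetric} to write $B^{*}$ in closed form, derive $B^{*}_{MEC}$ as the special case with no local caching or computing, and then specialize the ratio to each regime by resolving the $\min$, $\max$, and indicator operators in Lemma~\ref{symmetric}.

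\textbf{Step 1: Closed form for $B^{*}$ and $B^{*}_{MEC}$.} In the symmetric scenario every device caches and computes the same set of tasks, so a task is transmitted over the multicast link only when served via Route~3 or Route~4 \emph{and} at least one device requests it. Under uniform demand $P_{k,f}=1/F$, the latter event has probability $1-(1-1/F)^{K}$. The worst-user rates collapse to the common values $R_{3}$ and $R_{4}$, so using \eqref{input1}--\eqref{output1} I obtain
\begin{equation}
B^{*}=\frac{1-(1-1/F)^{K}}{\log(1+Ph^{2}/\sigma^{2})}\Bigl[n_{3}R_{3}+(F-n_{1}-n_{2}-n_{3})R_{4}\Bigr],
\end{equation}
and $B^{*}_{MEC}$ is the specialization $n_{1}=n_{2}=n_{3}=0$, i.e. all outputs downloaded from the server, yielding $FR_{4}$ in the bracket. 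Since $R_{3}/R_{4}=\tau/[\alpha(\tau-Iw/f_{1})]$, the ratio simplifies to
\begin{equation}
\frac{B^{*}}{B^{*}_{MEC}}=1-\frac{n_{1}+n_{2}}{F}-\frac{n_{3}}{F}\left(1-\frac{\tau}{\alpha(\tau-Iw/f_{1})}\right).
\end{equation}

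\textbf{Step 2: Matching the regimes.} I would first observe two equivalences: (i) $f_{1}\ge\sqrt{F\bar E/(\mu wC)}$ iff $\mu wCf_{1}^{2}\ge F\bar E$ iff $\beta_{c}\ge\beta_{e}/\alpha$, using $\alpha=O/I$; and (ii) $f_{1}>Iw/[(1-1/\alpha)\tau]$ iff $R_{3}<R_{4}$, which controls the indicator in the expression for $n_{3}$. Substituting into Lemma~\ref{symmetric}: for $\alpha\le 1$, both indicators vanish and $n_{1}=F\beta_{c}$, giving ratio $1-\beta_{c}$. For $\alpha>1$ with $\beta_{c}\ge\beta_{e}/\alpha$, the min collapses to $\beta_{e}/\alpha$, so $n_{1}=F(\beta_{c}-\beta_{e}/\alpha)$, $n_{2}=F\beta_{e}$, $n_{3}=0$, which after simplification gives $1-\beta_{c}-(1-1/\alpha)\beta_{e}$. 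For $\alpha>1$ with $\beta_{c}<\beta_{e}/\alpha$ and $f_{1}>Iw/[(1-1/\alpha)\tau]$, $n_{1}=0$, $n_{2}=F\alpha\beta_{c}$, and $n_{3}=F(\beta_{e}-\alpha\beta_{c})$, producing \eqref{gain2}. Finally, if additionally $f_{1}\le Iw/[(1-1/\alpha)\tau]$ the indicator forces $n_{3}=0$, yielding $1-\alpha\beta_{c}$.

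\textbf{Step 3: Monotonicity claims.} Substituting $\beta_{c}=C/(FO)$ and $\beta_{e}=\bar E/(\mu Iwf_{1}^{2})$, the claims for the first, second, and fourth regimes follow by inspection: each formula is affine in $-\beta_{c}$ (hence decreasing in $C$) and Case~2 increases with $f_{1}$ because $\beta_{e}$ decreases in $f_{1}$, while Case~4 is independent of $f_{1}$. The main obstacle is the non-monotone $f_{1}$-dependence in Case~3. My plan there is to note that the ratio equals $1-\alpha\beta_{c}$ at both endpoints of the interval, at $f_{1}=Iw/[(1-1/\alpha)\tau]$ because $1-R_{3}/R_{4}=0$ and at $f_{1}=\sqrt{F\bar E/(\mu wC)}$ because $\beta_{e}=\alpha\beta_{c}$, and is strictly smaller in between since both factors $(1-R_{3}/R_{4})$ and $(\beta_{e}-\alpha\beta_{c})$ are strictly positive there. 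The "first decreases then increases" pattern then follows from a direct single-variable calculus check showing that the derivative with respect to $f_{1}$ has a unique zero in the interior, which reduces to solving a quadratic-like equation in $1/f_{1}$.
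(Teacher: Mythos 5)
Your proposal is correct and follows essentially the same route as the paper: both obtain $B^{*}$ and $B^{*}_{MEC}$ by substituting the optimal $(n_{1},n_{2},n_{3})$ from Lemma~\ref{symmetric} into the closed-form bandwidth expression derived in Appendix~B (with $n_{1}=n_{2}=n_{3}=0$ for MEC computing) and then simplifying the ratio regime by regime. Your write-up is in fact more detailed than the paper's two-line proof, which omits the regime matching and the monotonicity checks entirely.
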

\begin{proof}
Please see Appendix~C.
\end{proof}

\begin{remark}[\textcolor{black}{$\alpha \leq 1$}]
We can see from Theorem~\ref{MECgain} that when the size of output data is smaller than that of input data ($\alpha \leq 1$) in the symmetric scenario, the bandwidth benefits only from the local caching \textcolor{black}{and thus there is no need for local computing}.
% For $\alpha>1$, it benefits from both local caching and computing capabilities of the mobile device when the mobile device has powerful computing ability, i.e., $f_1 > \min \left\{\frac{Iw}{(1-\frac{1}{\alpha})\tau}, \sqrt{\frac{F\bar{E}}{\mu wC}}\right\}$.
\end{remark}

\begin{remark}[\textcolor{black}{$\alpha>1$}]
In Theorem~\ref{MECgain}, we also reveal the important fact that when $\alpha>1$, computing and caching do not affect the bandwidth gain independently, but \textcolor{black}{interact on} each other to get the bandwidth gain. For example, when the computing ability $f_1$ and the caching size $C$ of the mobile device satisfy the following relationship: $f_1 \geq \sqrt{\frac{F\bar{E}}{\mu wC}}$, the bandwidth gain is $1-\beta_c-\left(1-\frac{1}{\alpha}\right)\beta_e$. \textcolor{black}{Otherwise,} the bandwidth gain becomes (\ref{gain2}) or (\ref{gain3}).
\end{remark}

\subsection{Bandwidth Gain from Multicast}
Finally, we analytically quantify the bandwidth gain resulting from the multicast transmission over the unicast transmission, in which the MEC server transmits the requested datas to the mobile devices via $K$ independent unicast channels. The average bandwidth requirement for unicast transmission under $\textbf{x}$, denoted as $B_{unicast}(\textbf{x})$, is given by
\begin{equation}\label{unicast}
B_{unicast}(\textbf{x}) \!\triangleq\! \sum_{k=1}^K \sum_{f=1}^FP_{k,f}\!\sum_{j=1}^4\! R_{f,j}^k\frac{1}{\log(1+\frac{Ph_k^2}{\sigma^2})}x_{f,j}^k,
\end{equation}
and denote with $B^*_{unicast}$ the minimum required bandwidth for unicast transmission. Based on Lemma~\ref{symmetric}, we obtain the multicast gain $\frac{B^*}{B_{unicast}^*}$ as below.
\begin{theorem}[Bandwidth Gain from Multicast]\label{gain1} In the symmetric scenario, we have
\begin{equation}
\frac{B^*}{B_{unicast}^*} = \frac{F(1-(1-\frac{1}{F})^K)}{K},
\end{equation}
which increases with $\frac{F}{K}$.
\end{theorem}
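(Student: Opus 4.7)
The plan is to use Lemma~\ref{symmetric} to obtain closed-form expressions for both $B^*$ and $B^*_{unicast}$ and then observe that the task-dependent parts cancel in the ratio. The key structural fact I would exploit is that in the symmetric scenario the optimal multicast service decision $\textbf{x}^*$ from Lemma~\ref{symmetric} is user-independent: $x_{f,j}^{k,*}$ depends only on $(f,j)$. Moreover, since the unicast bandwidth in (\ref{unicast}) is additively separable across users and all per-user constraints and per-user costs are identical, the unicast problem decomposes into $K$ identical single-user problems. A greedy task-ranking argument of the same form as the one proving Lemma~\ref{symmetric}, applied with the uniform weight $P_{k,f}=1/F$, shows that the unicast minimizer coincides with the $\textbf{x}^*$ of Lemma~\ref{symmetric}. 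This coincidence is what allows the common factors in $B^*$ and $B^*_{unicast}$ to match.

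Next, I would substitute $\textbf{x}^*$ into (\ref{input1})--(\ref{output1}). Because all users share the same channel gain $h$, the factor $1/\log(1+Ph^2/\sigma^2)$ pulls out of the maxima, and because $x_{f,j}^{k,*}$ does not depend on $k$, the inner maxima collapse to
\begin{equation*}
\max_{k\in\mathcal{K}} \textbf{1}(A_k=f)\, x_{f,j}^{k,*} = x_{f,j}^{*}\,\textbf{1}\!\bigl(\exists\, k:\, A_k=f\bigr).
\end{equation*}
Independence of requests with $P_{k,f}=1/F$ then gives $\mathbb{P}(\exists\, k:\, A_k=f) = 1-(1-1/F)^K$. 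Summing over $f$ and noting that only Routes~3 and 4 carry nonzero rate (values $R_3$ and $R_4$ respectively, with task counts $n_3$ and $F-n_1-n_2-n_3$ from Lemma~\ref{symmetric}), one obtains
\begin{equation*}
B^* = \frac{1-(1-1/F)^K}{\log\!\bigl(1+Ph^2/\sigma^2\bigr)}\bigl(n_3 R_3 + (F-n_1-n_2-n_3)\, R_4\bigr).
\end{equation*}
Analogously, plugging $\textbf{x}^*$ into (\ref{unicast}) with $P_{k,f}=1/F$ yields
\begin{equation*}
B^*_{unicast} = \frac{K/F}{\log\!\bigl(1+Ph^2/\sigma^2\bigr)}\bigl(n_3 R_3 + (F-n_1-n_2-n_3)\, R_4\bigr),
\end{equation*}
after which the bracketed factor and the logarithm cancel in the ratio to give exactly $F(1-(1-1/F)^K)/K$.

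For the monotonicity claim, I would write the ratio as $g(K/F) = (1-(1-1/F)^K)/(K/F)$ and show that $g$ is strictly decreasing in its argument, either by differentiating with respect to $K$ treated as a continuous variable and checking the sign, or by using the elementary inequality $(1-u^a)/a > (1-u^b)/b$ for $u\in(0,1)$ and $0<a<b$. The main obstacle I foresee is not the final calculation but the unicast-optimality claim in the first paragraph: one must verify that the per-user unicast subproblem really inherits the same ranking $(n_1,n_2,n_3)$ as Lemma~\ref{symmetric}, because in unicast the marginal benefit of caching each task is weighted by $P_{k,f}$ rather than by the multicast hit probability $1-(1-P_{k,f})^K$. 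In the symmetric case both weightings are monotone in $P_{k,f}$ and $P_{k,f}$ is constant, so the same greedy ordering is optimal, but this equivalence needs to be stated explicitly in order for the cancellation of the task-dependent factor to be justified.
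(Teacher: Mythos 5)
Your proposal is correct and follows essentially the same route as the paper: Appendix~D simply takes $B^*$ from the closed form derived in Appendix~B and cites \cite{Sunvr} for $B^*_{unicast}$, while you additionally work out the step the paper leaves implicit, namely that the symmetric unicast problem decomposes into $K$ identical single-user linear programs whose minimizer coincides with the multicast one of Lemma~\ref{symmetric}, so that the common factor $n_3R_3+(F-n_1-n_2-n_3)R_4$ and the channel term cancel in the ratio. One small caution: the ratio $F(1-(1-\frac{1}{F})^K)/K$ is not literally a function of $K/F$ alone (the term $(1-\frac{1}{F})^K$ depends on $F$ and $K$ separately, e.g.\ $(F,K)=(2,2)$ and $(4,4)$ give different values), so your monotonicity argument is best phrased as monotone decrease in $K$ for fixed $F$ via your inequality $(1-u^a)/a>(1-u^b)/b$ for $u\in(0,1)$, $a<b$, which is exactly the form asserted in the paper's conclusion; this imprecision is inherited from the theorem statement itself rather than introduced by you.
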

\begin{proof}
Please see Appendix D.
\end{proof}

Theorem~\ref{gain1} shows that in the symmetric scenario, the multicast gain depends only on the number of users $K$ and that of tasks $F$, and is unrelated to the computing and caching capabilities of mobile device.
 \begin{figure}[t]
\begin{center}
 \includegraphics[width=8.5cm]{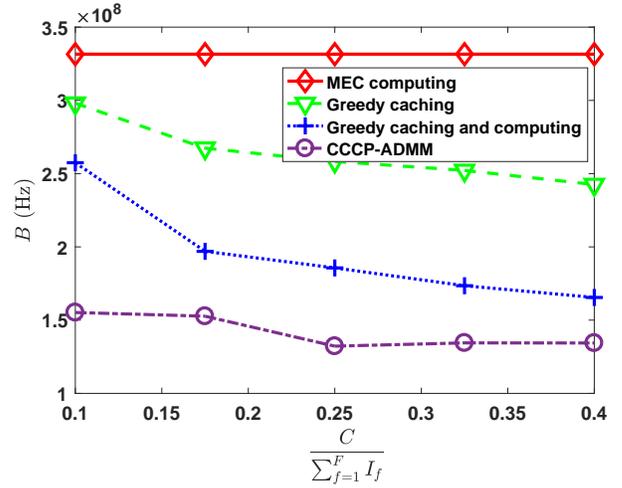}
\end{center}
 \caption{\small{Impact of $C$. $C_k=C,\ f_k=f_1$ and $\frac{1}{\log(1+\frac{Ph_k^2}{\sigma^2})} = 0.1*k$ for all $k\in \mathcal{K}$,  $F = 50$, $K = 4$, $I_f \in [10,15]$ M bits, $\alpha = 3$, $w = 10\ cycles/bit$, $\mu=10^{-27}$, $f_1= 1.1*10^{11}Hz$, $\bar{E} = 1.7*10^3J $, $P_{k,f} \propto \frac{1}{f^\gamma}$ with $\gamma= 1$, $\rho = 10^4$.} }
\label{heteC}
\end{figure}

\begin{figure}[t]
\begin{center}
 \includegraphics[width=8.2cm]{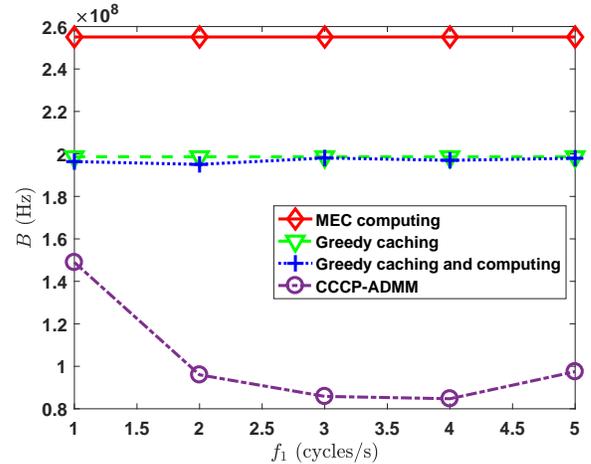}
\end{center}
 \caption{\small{Impact of $f_1$. $\frac{C}{\sum_{f=1}^F I_f} = 0.3$ and other parameters are the same as that in Fig.~\ref{heteC}.}
 }
\label{hetef1}
\end{figure}

\section{Numerical Results}
In this section, we present numerical results to evaluate the performance of the proposed CCCP-ADMM algorithm in terms of bandwidth saving. We compare it with the following three baselines:
\begin{itemize}
\item \textbf{MEC computing}: requests for all tasks are satisfied via Route~4, i.e., $x_{f,4}^k = 1$ for all $f\in \mathcal{F}$ and $k\in \mathcal{K}$;
\item \textbf{Greedy caching}: all the requests are satisfied via Route~1, i.e., for each user $k\in \mathcal{K}$, sort $\mathcal{F}$ according to $\frac{P_{k,f}R_{f,4}^k}{O_f}$ in descending order, denote with $\lfloor j\rfloor$ the index $f\in \mathcal{F}$ with the $j$-th maximal value of $\frac{P_{k,f}R_{f,4}^k}{O_f}$, and $s_c$ the split index satisfying $\sum_{j=1}^{s_c-1} O_{\lfloor j \rfloor} \leq C_k$ and $\sum_{j=1}^{s_c} O_{\lfloor j \rfloor} > C_k$. \textcolor{black}{Set $x_{\lfloor j\rfloor,1}^k = 1$, $x_{\lfloor j\rfloor,4}^k = 0$ for all $j\in \{1,\cdots,s_c\}$ and $x_{\lfloor j\rfloor,1}^k = 0$, $x_{\lfloor j\rfloor,4}^k = 1$, otherwise.} \textcolor{black}{$x_{f, i}^k = 0$ for all $i\in \{2,3\}$ and $f\in \mathcal{F}$}. Note that the complexity of this algorithm is $\mathcal{O}\left(KF\log(F)\right)$;%$c_{\lfloor j\rfloor}^O=1,c_{\lfloor j\rfloor}^I=0,d_{\lfloor j\rfloor}=0$ for all $j\in \{1,\cdots,s_c\}$ and $c_{\lfloor j\rfloor}^O=0,c_{\lfloor j\rfloor}^I=0,d_{\lfloor j\rfloor}=0$, otherwise;
\item \textbf{Greedy caching and computing}: for each user $k\in \mathcal{K}$, local input caching with local computing is determined via greedy algorithm, i.e., sort $\mathcal{F}$ according to $\frac{P_{k,f}R_{f,4}^k}{O_f+P_{k,f}\mu I_fw_f f_k^2}$ in descending order, denote with $\lfloor j\rfloor$ the index $f\in \mathcal{F}$ with the $j$-th maximal value of $\frac{P_{k,f}R_{f,4}^k}{O_f+P_{k,f}\mu I_fw_f f_k^2}$, and $s_c^1$ the split index satisfying $\sum_{j=1}^{s_c^1-1} I_{\lfloor j \rfloor} \leq C_k$ and $\sum_{j=1}^{s_c^1} I_{\lfloor j \rfloor} > C_k$ or $\sum_{j=1}^{s_c^1-1} P_{k,\lfloor j\rfloor} \mu I_{\lfloor j \rfloor}w_{\lfloor j\rfloor}f_k^2 \leq \bar{E}_k$ and $\sum_{j=1}^{s_c^1} P_{k,\lfloor j\rfloor}\mu I_{\lfloor j \rfloor}w_{\lfloor j\rfloor}f_k^2 > \bar{E}_k$. Set  $x_{\lfloor j \rfloor,2}^k = 1$ for all $j\in \{1,\cdots,s_c^1-1\}$, \textcolor{black}{and $x_{\lfloor j \rfloor,2}^k = 0$, otherwise}. Then, if there still exists underutilized cache size, i.e., $\sum_{j=1}^{s_c^1-1} I_{\lfloor j \rfloor} < C_k$, then outputs of the rest of tasks are cached at the mobile device via greedy algorithm. Otherwise, if $\sum_{j=1}^{s_c^1-1} P_{k,\lfloor j\rfloor}\mu I_{\lfloor j \rfloor}w_{\lfloor j\rfloor}f_k^2 < \bar{E}_k$, then local computing without caching is decided via greedy algorithm according to $\frac{P_{k,f}(R_{f,4}^k-R_{f,3}^k)}{P_{k,f}\mu I_f w_ff_k^2}$. Note that the complexity of this algorithm is $\mathcal{O}\left(KF\log(F)\right)$. 
\end{itemize}
\begin{figure}[t]
\begin{center}
 \includegraphics[width=8cm]{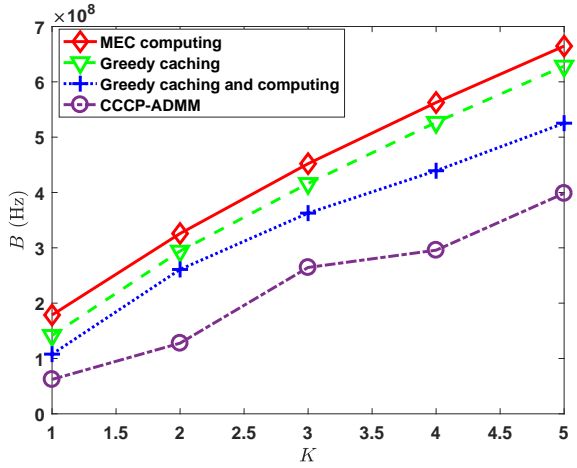}
\end{center}
 \caption{\small{Impact of $K$. $\frac{C}{\sum_{f=1}^F I_f} = 0.3$ and other parameters are the same as that in Fig.~\ref{heteC}.}
 }
\label{heteK}
\end{figure}

 \begin{figure}[t]
\begin{center}
 \includegraphics[width=7.6cm]{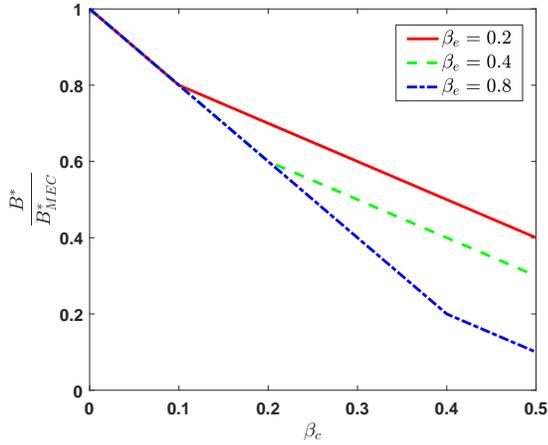}
\end{center}
 \caption{\small{Impact of $\beta_c$ on local caching and computing gain. $F = 50$, $K = 10$, $I = 15  M\ bits$, $w = 10\ cycles/bit$, $\alpha = 2$. }%, $f_1= 1.1*10^{11}Hz$, $\bar{E} = 1.7*10^3J $ \mu=10^{-27}
 }
\label{multigainC}
\end{figure}

%We illustrate the performance of CCCP in Fig.~12.

Fig.~\ref{heteC} and Fig.~\ref{hetef1} illustrate the impacts of the local cache size, i.e., $C$, and  computation frequency, i.e., $f_1$, on the average bandwidth cost, respectively. Fig.~\ref{heteK} illustrates the impact of the number of users, i.e., $K$, on the bandwidth requirement. We see that CCCP-ADMM exhibits great promises in saving  communication bandwidth compared with the baselines. For example,  in Fig.~2, compared with MEC computing, greedy caching and greedy caching and computing, CCCP-ADMM  brings significant transmission rate gain (e.g., 57.2\%,  42.3 \% vs. 25\% at $\frac{C}{\sum_{f=1}^F I_f} = 17.5\%$).
\begin{figure}[t]
\begin{center}
 \includegraphics[width=7.5cm]{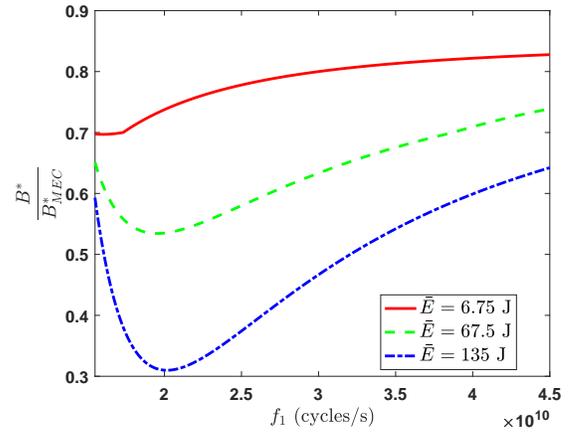}
\end{center}
 \caption{\small{Impact of $f_1$ on local caching and computing gain. $\beta_c = 0.3$, $ \mu=10^{-27}
$ and other parameters are the same as that in Fig.~\ref{multigainC}.}
 }
\label{multigainf1}
\end{figure}

\begin{figure}[t]
\begin{center}
 \includegraphics[width=7.4cm]{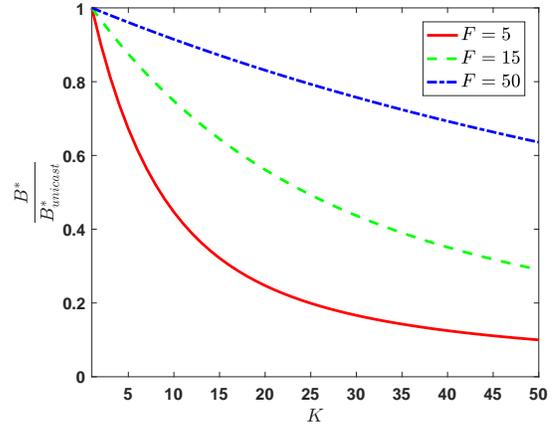}
\end{center}
 \caption{\small{Impact of $K$ on multicast gain.}
 }
\label{multigainK}
\end{figure}

In Fig.~\ref{multigainC}, we present the bandwidth gain versus the normalized caching size $\beta_c$ with different normalized average energy in the symmetric scenario. We can see that the local caching and computing gain increases with $\beta_c$ and the increasing rate depends on the relationship between $\beta_e$ and $\beta_c$. 

 From Fig.~\ref{multigainf1}, we can see that the local caching and computing gain decreases with $f_1$ when the average energy is limited, since increasing $f_1$ decreases the number of computation tasks that can be computed locally. However, the gain first increases and then decreases with $f_1 $ when the average energy is relatively large, since increasing $f_1$ decreases the transmission rate requirement.

From Fig.~\ref{multigainK}, we can see that the multicast gain increases with $K$. %In addition,  multicast transmission achieves bandwidth gain only when $F \leq K$, and not otherwise.
This is mainly because when the number of users $K$ increases, the probability that multiple users request the same task increases, and thus the multicast gain is growing.

\section{Conclusion}
In this paper, we investigate the impacts of the caching and computing resources at mobile devices on the transmission bandwidth, and optimize the joint caching and computing policy to minimize the average transmission bandwidth under the latency, local caching and local average energy consumption constraints. In particular, we first show the NP-hardness of the problem and transform it to a DC problem without loss of equivalence, which is solved efficiently via CCCP together with ADMM. In the symmetric scenario, we obtain the optimal joint policy and the closed form expressions for local caching and computing gain as well as multicast gain. In summary, we show theorectically that: in the symmetric scenario,
\begin{itemize}
  %\item increasing the computing frequency of mobile VR p$f_1$ does not always reduce the average transmission rate $R^*$;
  \item $\frac{B^*}{B_{MEC}^*}$ decreases with $C$;
    \item $\frac{B^*}{B_{MEC}^*}$ increases with $f_1$ when $\alpha > 1$ and $f_1 > \sqrt{\frac{F\bar{E}}{\mu wC}}$;
        \item $\frac{B^*}{B_{MEC}^*}$ first decreases and then increases with $f_1$ when $\alpha > 1$ and $\frac{Iw}{(1-\frac{1}{\alpha})\tau} < f_1 \leq \sqrt{\frac{F\bar{E}}{\mu wC}}$;
  \item $\frac{B^*}{B_{MEC}^*}$ remains unchanged with $f_1$ when $\alpha \leq 1$ or when $\alpha > 1$ and $f_1 \leq \frac{Iw}{(1-\frac{1}{\alpha})\tau}$;

 \item $\frac{B^*}{B_{unicast}^*}$ decreases with $K$.
\end{itemize}
\section*{Appendix A: Proof of Lemma~\ref{sub}}\label{aaaaaaaa}
Monotonicity is obvious since any caching of a new file cannot increase the value of the objective function.  In order to show the submodularity of the objective function, it is enough to prove that for each $n\in \mathcal{N}$ and $f\in \mathcal{F}$, $\frac{O_f}{\tau}\!\max_{k\in \mathcal{K}:A_{n,k}=f} \frac{1-x_{f,1}^k}{\log\left(1+\frac{Ph_k^2}{\sigma^2}\right)}$ is a submodular function since the sum of submodular functions is submodular \cite{femto}. It is direct to see that the marginal value of  $\frac{O_f}{\tau}\!\max_{k\in \mathcal{K}:A_{n,k}=f} \frac{1-x_{f,1}^k}{\log\left(1+\frac{Ph_k^2}{\sigma^2}\right)}$ for adding a new file decreases with $x_{f,1}^k$. Thus, the objective function of Problem~4 is a nonincreasing submodular function.  In addition, the constraints of Problem~4 can be rewritten as multiple matroid constraints according to \cite{femto} directly. The proof ends.
\section*{Appendix B: Proof of Lemma~\ref{symmetric}}\label{proofpolicy}
First, in the symmetric scenario,  Problem~\ref{Prob1} can be rewritten as
\begin{Prob}[Optimization Problem in Symmetric Scenario]\label{symmetricprob}
\begin{align}
&\min_{\textbf{x}}\ \ \ \frac{1}{F^K}\frac{1}{\log\left(1+\frac{Ph^2}{\sigma^2}\right)} \sum_{\textbf{A}\in \mathcal{F}^K}\sum_{f=1}^F  \Bigg[R_3 \max_{k\in \mathcal{K}} \textbf{1}(A_k=f)x_{f,3}^k\nonumber\\
&\hspace{11mm}+R_4 \max_{k\in \mathcal{K}} \textbf{1}(A_k=f)(1-\sum_{j=1}^3x_{f,j}^k)\Bigg]\nonumber\\
&\ s.t.\ \ \  \sum_{j=1}^3 x_{f,j}^k \leq 1,\ f\in \mathcal{F},\ k \in \mathcal{K},\\
&\ \ \ \ \ \ \  \sum_{f=1}^F \alpha x_{f,1}^k + x_{f,2}^k \leq \frac{C}{I},\ k\in \mathcal{K},\\
&\ \ \ \ \ \ \  \sum_{f=1}^Fx_{f,2}^k+x_{f,3}^k \leq \frac{F\bar{E}}{\mu Iwf_1^2},\ k\in \mathcal{K},\\
&\ \ \ \ \ \ \ \  x_{f,j}^k \in \{0,1\},\ f\in \mathcal{F},\ j\in \{1,2,3\},\ k\in \mathcal{K}.
\end{align}
\end{Prob}
\noindent
And $x_{f,4}^k$ can be obtained from $x_{f,4}^k = 1-\sum_{j=1}^3x_{f,j}^k$, for all $f\in \mathcal{F}$ and $k\in \mathcal{K}$.

%Secondly, we prove that the objective function of Problem~\ref{symmetricprob} is a monotonically submodular set function subject to multiple linear constraints \cite{submodular}. Based on the definition of submodular set function, it is direct to observe that $R_3 \max_{k\in \mathcal{K}} \textbf{1}(A_k=f)x_{f,3}^k$ and $R_4 \max_{k\in \mathcal{K}} \textbf{1}(A_k=f)x_{f,4}^k = R_4 \max_{k\in \mathcal{K}} \textbf{1}(A_k=f)\left(1-x_{f,1}^k-x_{f,2}^k-x_{f,3}^k\right)$ belong to submodular set functions. We omit the details due to the page limitation. Recalling the closedness property of submodular function that non-negative linear combination of submodular set functions are still submodular, the submodularity of the objective function of Problem~\ref{symmetricprob} is proved.

%based on the submodularity of the objective function of Problem~\ref{symmetricprob},

Then,  we show the optimality of symmetric policy, i.e., $x_{f,j}^{k_1} = x_{f,j}^{k_2}$, for all $f\in \mathcal{F}$, $j \in \{1,2,3,4\}$, $k_1 \in \mathcal{K}$ and $k_2 \in \mathcal{K}$. In addition, since the parameters of all the tasks are uniformly distributed in such scenario,  it is equivalent to show that
$\check{\textbf{x}} \triangleq (\check{x}_{f,j}^k)_{f\in \mathcal{F},j\in \{1,2,3,4\},k\in \mathcal{K}}$ given by
\begin{equation}\label{num3D}
\check{x}_{f,j}^k =
\begin{cases}
1& \text{$f = \sum_{j'=1}^{\max\{j-1,1\}}n_{j'}+1, \cdots, \sum_{j'=1}^{j}n_{j'}$,}\\
0& \text{otherwise,}\\
\end{cases}
\end{equation}
is without loss of optimality, where $n_j$ represents the number of tasks that are served via Route~$j$ at each mobile device.
%Specifically, denote with $n_{k,j} \triangleq \sum_{f=1}^F x_{f,j}^k$ the number of tasks that are served via Route~$j\in \{1,2,3,4\}$ at the mobile device $k$, and thus we have $\sum_{j=1}^4 n_{k,j}=F$. In particular, for mobile device $1$, given $n_{1,j}$ for all  $j\in \{1,2,3,4\}$, we assume that the corresponding $(x_{f,j}^1)_{f\in \mathcal{F}, j \in \{1,2,3,4\}}$ is given by
%\begin{equation}\label{num3D}
%x_{f,j}^1 =
%\begin{cases}
%1& \text{$f = \sum_{j'=1}^{\max\{j-1,1\}}n_{k,j'}+1, \cdots, \sum_{j'=1}^{j}n_{k,j'}$,}\\
%0& \text{otherwise,}
%\end{cases}
%j \in \{1,2,3,4\},
%\end{equation}
%which is without loss of optimality, due to the fact that the parameters of all the tasks are uniformly distributed in the symmetric scenario. Accordingly, in order to show the symmetric optimality, it is equivalent to prove that $(x_{f,j}^k)_{f\in \mathcal{F}, j\in \{1,2,3,4\}}$ for all $k\in \mathcal{K}$ is also given by (\ref{num3D}). That is, $\check{\textbf{x}} \triangleq (\check{x}_{f,j}^k)_{f\in \mathcal{F},j\in \{1,2,3,4\},k\in \mathcal{K}}$ given by
%\begin{equation}\label{num3D}
%\check{x}_{f,j}^k =
%\begin{cases}
%1& \text{$f = \sum_{j'=1}^{\max\{j-1,1\}}n_{j'}+1, \cdots, \sum_{j'=1}^{j}n_{j'}$,}\\
%0& \text{otherwise,}
%\end{cases}
%j \in \{1,2,3,4\},\ k \in \mathcal{K},
%\end{equation}
%is without loss of optimality, where $n_j$ represents the number of tasks that are served via Route~$j$ at each mobile device.
In the following, we prove (\ref{num3D}) based on contradiction. Specifically, for any $\bar{\textbf{x}} \triangleq (\bar{x}_{f,j}^k)_{f\in \mathcal{F},j\in \{1,2,3,4\},k\in \mathcal{K}}$ satisfying that: i) there exists $k' \in \mathcal{K}$, $f^1\in \mathcal{F} $ and  $f^2 \in \mathcal{F}$ such that $f^1 \leq f^2$, $\bar{x}_{f^1,j^1}^{k'} = 0$, $\bar{x}_{f^1,j^2}^{k'} = 1$, $\bar{x}_{f^2,j^1}^{k'} = 1$ and $\bar{x}_{f^2,j^2}^{k'} = 0$, where $j^i \triangleq \arg_{j\in \{1,2,3,4\}}\ \textbf{1}(\check{x}_{f^i,j}^{k'}=1)$; ii) $\bar{x}_{f,j}^k = \check{x}_{f,j}^k$ otherwise, we have
\begin{align}
&B(\bar{\textbf{x}}) - B(\check{\textbf{x}}) \nonumber\\
&= \frac{1}{F^K}\frac{1}{\log\left(1+\frac{Ph^2}{\sigma^2}\right)} \nonumber\\
&\Bigg[\sum_{\textbf{A} \in \left \{\mathcal{F}^K:A_{k'} = f^1\right\}} \Big(W(\textbf{A},f^1,\bar{\textbf {x}}) -W(\textbf{A},f^1,\check{\textbf{x}}) \Big) \nonumber\\
&+ \sum_{\textbf{A} \in \left \{\mathcal{F}^K:A_{k'} = f^2\right\}}\Big( W(\textbf{A},f^2,\bar{\textbf {x}}) -W(\textbf{A},f^2,\check{\textbf{x}})\Big)\Bigg],%\nonumber
\end{align}
where $ W(\textbf{A},f^i,\textbf{x}) \triangleq R_3 \max_{k\in \mathcal{K}} \textbf{1}(A_k=f^i) x_{f^i,3}^k + R_4  \max_{k\in \mathcal{K}} \textbf{1}(A_k=f^i) \left(1- \sum_{j=1}^3 x_{f^i,j}^k\right)$.
In the sequel, we analyze the positivity of $B(\bar{\textbf{x}}) - B(\check{\textbf{x}})$ in the following cases:
\begin{itemize}
\item If $f^1 \leq n_1, f^2 \leq n_1$, then we have $j^1=j^2=1$. Thus, $\bar{\textbf{x}} = \check{\textbf{x}}$ and $B(\bar{\textbf{x}}) - B(\check{\textbf{x}})=0$;

\item  If $f^1 \leq n_1, n_1+1 \leq f^2 \leq n_1+n_2$, then we have $j^1=1$ and $j^2 = 2$. Thus, $W(\textbf{A},f^i,\check{\textbf{x}})=W(\textbf{A},f^i,\bar{\textbf{x}})=0$, for all $i \in \{1,2\}$ and $\textbf{A} \in \left \{\mathcal{F}^K:A_{k'} = f^i\right\}$, and $B(\bar{\textbf{x}}) - B(\check{\textbf{x}})=0$;

\item If $f^1 \leq n_1, n_1+n_2+1 \leq f^2 \leq n_1+n_2+n_3$, then we have $j^1=1$ and $ j^2 =3$. Accordingly, $W(\textbf{A},f^1, \bar{\textbf{x}}) = R_3$, $W(\textbf{A},f^1,\check{\textbf{x}})=0$, $W(\textbf{A},f^2,\bar{\textbf{x}})= R_3 \max_{k\in \mathcal{K} \setminus k'} \textbf{1}\{A_k=f^2\} \bar{x}_{f^2,3}^k$ and $W(\textbf{A},f^2,\check{\textbf{x}}) = R_3$. Thus, $B(\bar{\textbf{x}})\! - B(\check{\textbf{x}})= \frac{1}{F^K}\frac{1}{\log\left(1+\frac{Ph^2}{\sigma^2}\right)} \sum_{\textbf{A} \in \left \{\mathcal{F}^K: A_{k'}=f^2 \right\} } R_3 \max_{k\in \mathcal{K} \setminus k'} \textbf{1}\{A_k\!=f^2\} \bar{x}_{f^2,3}^k \geq 0$;

\item If $f^1 \leq n_1, n_1+n_2+n_3+1 \leq f^2 \leq n_1+n_2+n_3+n_4$, then we have $j^1=1$ and $ j^2 =4$. Accordingly, $W(\textbf{A},f^1, \!\bar{\textbf{x}})\! = R_4$, $W(\textbf{A},f^1,\check{\textbf{x}})\!=0$, $W(\textbf{A},f^2,\bar{\textbf{x}})=\! R_4 \max_{k\in \mathcal{K} \setminus k'} \textbf{1}\{A_k\!=f^2\}\! \bar{x}_{f^2,4}^k$ and $W(\textbf{A},f^2,\check{\textbf{x}}) \!=\! R_4$. Thus, $B(\bar{\textbf{x}}) - B(\check{\textbf{x}})= \frac{1}{F^K}\frac{1}{\log\left(1+\frac{Ph^2}{\sigma^2}\right)} \sum_{\textbf{A} \in \left \{\mathcal{F}^K: A_{k'}=f^2 \right\} } R_4 \max_{k\in \mathcal{K} \setminus k'} \textbf{1}\{A_k\!=f^2\} \bar{x}_{f^2,4}^k \geq 0$;

\item If $n_1+1 \leq f^1 \leq n_1+n_2$, similar to the cases mentioned above, we have $B(\bar{\textbf{x}}) - B(\check{\textbf{x}}) \geq 0$;

\item If $n_1+n_2+1 \leq f^1 \leq n_1+n_2+n_3, n_1+n_2+1 \leq f^2 \leq n_1+n_2+n_3$, then we have $j^1=j^2=3$. Thus, $\bar{\textbf{x}} = \check{\textbf{x}}$ and $B(\bar{\textbf{x}}) - B(\check{\textbf{x}})=0$;

\item If $n_1+n_2+1 \leq f^1 \leq n_1+n_2+n_3, n_1+n_2+n_3+1 \leq f^2 \leq n_1+n_2+n_3+n_4$, then we have $j^1=3$ and $j^2=4$. Accordingly, $W(\textbf{A},f^1, \bar{\textbf{x}}) = R_3 \max_{k\in \mathcal{K} \setminus k'} \textbf{1}\{A_k=f^1\} \bar{x}_{f^1,3}^k + R_4$, $W(\textbf{A},f^1,\check{\textbf{x}})=R_3 $, $W(\textbf{A},f^2,\bar{\textbf{x}})= R_3+R_4 \max_{k\in \mathcal{K} \setminus k'} \textbf{1}\{A_k=f^2\} \bar{x}_{f^2,4}^k$ and $W(\textbf{A},f^2,\check{\textbf{x}}) = R_4$. Thus, $B(\bar{\textbf{x}}) - B(\check{\textbf{x}})= \frac{1}{F^K}\frac{1}{\log\left(1+\frac{Ph^2}{\sigma^2}\right)} \Bigg[\sum_{\textbf{A} \in \left \{\mathcal{F}^K:A_{k'} = f^1\right\}} R_3 \max_{k\in \mathcal{K} \setminus k'} \textbf{1}\{A_k=f^1\} \bar{x}_{f^1,3}^k + \sum_{\textbf{A} \in \left \{\mathcal{F}^K:A_{k'} = f^2\right\}} R_4 \max_{k\in \mathcal{K} \setminus k'} \textbf{1}\{A_k=f^2\} \bar{x}_{f^2,4}^k\Bigg] \geq 0$.
\end{itemize}
Thus, we have $B(\bar{\textbf{x}}) - B(\check{\textbf{x}}) \geq 0$, which contradicts the optimality of $\check{\textbf{x}}$, and the optimality of the symmetric assumption in (\ref{num3D}) holds.

Next, based on the symmetric property of the joint policy in (\ref{num3D}), the objective function of Problem~\ref{symmetricprob}, i.e., $B(\textbf{x}) $, can be rewritten as:
\begin{align}\label{equation}
&B(\textbf{x})\nonumber\\ &\overset{(a)}{=}\frac{1}{\log\left(1\!+\frac{Ph^2}{\sigma^2}\right)}\! \sum_{\textbf{A}\in \mathcal{F}^K} P(\textbf{A}) \!\Bigg(\sum_{f=n_1+n_2+1}^{n_1+n_2+n_3} R_3\nonumber\\&\hspace{5mm} \max_{k\in \mathcal{K}}\textbf{1}(A_k=f) + \!\sum_{f=n_1+n_2+n_3+1}^{F}\!R_4\! \max_{k\in \mathcal{K}} \textbf{1}(A_k=f)\Bigg)\nonumber\\
&\overset{(b)}{=} \frac{1}{\log\left(1+\frac{Ph^2}{\sigma^2}\right)} \Bigg[R_3\sum_{f=n_1+n_2+1}^{n_1+n_2+n_3} \sum_{(\pi_{f,k})_{k\in \mathcal{K}} \in \{0,1\}^K} \nonumber\\
&\hspace{35mm}P\left((\pi_{f,k})_{k\in \mathcal{K}}\right) \textbf{1}\left(\sum_{k\in \mathcal{K}} \pi_{f,k}>1\right)\nonumber\\
& + R_4\sum_{f=n_1+n_2+n_3+1}^{F} \sum_{(\pi_{f,k})_{k\in \mathcal{K}} \in \{0,1\}^K} P\left((\pi_{f,k})_{k\in \mathcal{K}}\right)\nonumber\\
& \hspace{40mm} \textbf{1}\left(\sum_{k\in \mathcal{K}} \pi_{f,k}>1\right)\Bigg]\nonumber\\
%& \overset{(c)}{=} R_3n_3 \left(1-\Pr\{()\}\right)\nonumber\\
&= \frac{1}{\log\left(1+\frac{Ph^2}{\sigma^2}\right)} \left( 1\!-(1\!-\frac{1}{F})^K\right)\nonumber\\
&\hspace{25mm} \left(R_3n_3\! +\! R_4 \left(F-n_1-n_2-n_3\right)\right),
\end{align}
where $P\left((\pi_{f,k})_{k\in \mathcal{K}}\right) \triangleq \frac{1}{F}\pi_{f,k} + (1-\frac{1}{F})(1- \pi_{f,k})$. Specifically, (a) is directly obtained from (\ref{num3D}), and  (b) has been proved in \cite{hete}. Accordingly, Problem~\ref{symmetricprob} can be rewritten as
\begin{Prob}[Optimization Problem in Symmetric Scenario]\label{symmetricprob2}
\begin{align}
&\min_{n_1,n_2,n_3}\ \ \ \frac{1}{\log\left(1+\frac{Ph^2}{\sigma^2}\right)}\left( 1-(1-\frac{1}{F})^K\right) \Big(R_3n_3\nonumber\\
&\hspace{40mm} + R_4 \left(F-n_1-n_2-n_3\right)\Big)\nonumber\\
%&\ s.t. \hspace{5cm}\sum_{j=1}^3 x_{f,j}^k \leq 1,\ f\in \mathcal{F},\ k \in \mathcal{K},\\
&\ s.t. \hspace{10mm}   \alpha n_1+ n_2 \leq \frac{C}{I},\\
&\ \ \ \ \hspace{10mm}n_2+n_3 \leq \frac{F\bar{E}}{\mu Iwf_1^2},\\
&\ \ \ \ \hspace{10mm} n_1+n_2+n_3 \leq F,\\
&\ \ \ \ \hspace{10mm}n_1 \geq 0,\ n_2 \geq 0,\ n_3 \geq 0.
%&\ \ \ \ \hspace{5cm} x_{f,j}^k \in \{0,1\},\ f\in \mathcal{F},\ j\in \{1,2,3\},\ k\in \mathcal{K}.
\end{align}
\end{Prob}
Note that Problem~\ref{symmetricprob2} is a linear programming, and the solution can be trivially obtained.  The proof ends.
\section*{Appendix C: Proof of Theorem~\ref{MECgain}}\label{prooftheo1}
$B_{MEC}^*$ and $B^*$ can be obtained directly from (\ref{equation}). Specifically, for MEC computing, we have  $n_1=n_2=n_3=0$ and $n_4=F$, and thus
\begin{equation}
B_{MEC}^* = \frac{1}{\log\left(1+\frac{Ph^2}{\sigma^2}\right)} \left( 1\!-(1\!-\frac{1}{F})^K\right) FR_4.
\end{equation}
For $B^*$, similarly, from Lemma~\ref{symmetric}, we can obtain the optimal value of $n_i$ for all $i\in \{1,2,3,4\}$. By substituting $n_i$ for all $i\in \{1,2,3,4\}$ into (\ref{equation}), we can directly obtain $B^*$. The proof ends.

\section*{Appendix D: Proof of Theorem~\ref{gain1}}\label{prooftheo1}
In the symmetric scenario, $B_{unicast}^*$ can be obtained directly from \cite{Sunvr}, and $B^*$ can be obtained as mentioned above. The proof ends.

\end{document}